\documentclass[runningheads,envcountsame]{llncs}
\usepackage[T1]{fontenc}

\usepackage{amsfonts,amssymb,amsmath}
\usepackage{mathtools}
\usepackage{mathrsfs}
\usepackage{tikz}
\usepackage{tikz-cd}
\usepackage{enumitem}

\usepackage{graphicx}
\usepackage{color}
\usepackage{hyperref}
\usepackage{stackengine}

\usetikzlibrary{arrows, trees}

\makeatletter
\renewcommand{\tableofcontents}{%
  \section*{\contentsname}
  \@starttoc{toc}
}
\makeatother

\newenvironment{prf}{\begin{proof}}{\qed\end{proof}}

\spnewtheorem{dfn}[theorem]{Definition}{\bfseries}{}

\newcommand{\zero}{\mathbf{0}} 
\newcommand{\colim}{\operatornamewithlimits{colim}}
\newcommand{\id}{\mathrm{id}} 
\newcommand\EM{\mathsf{EM}}
\newcommand\Id{\mathrm{Id}}
\newcommand\Pow{\mathcal{P}}

\newcommand{\down}{{\downarrow}} 

\newcommand\nattrans{\Rightarrow}
\tikzset{
    nat/.style  = {-implies,double,double equal sign distance}
}
\newenvironment{tikzcdnat}[1][]{\begin{tikzcd}[arrows=nat,#1]}{\end{tikzcd}}

\newcommand{\Q}{\mathscr{Q}} 
\newcommand{\M}{\mathscr{M}} 

\newcommand{\emb}{\rightarrowtail} 
\newcommand{\epi}{\twoheadrightarrow} 
\newcommand{\xepi}[2][]{%
  \xrightarrow[#1]{#2}\mathrel{\mkern-14mu}\rightarrow
}
\newcommand{\xemb}[2][]{\ensurestackMath{\mathrel{%
  \stackengine{1pt}{%
    \stackengine{0pt}{\rightarrowtail}{\scriptstyle#2}{O}{c}{F}{F}{S}%
  }{\scriptstyle#1}{U}{c}{F}{F}{S}%
}}}

\DeclareMathOperator{\Set}{\mathsf{Set}}
\DeclareMathOperator{\C}{\mathscr{C}} 
\DeclareMathOperator{\A}{\mathscr{A}} 
\DeclareMathOperator{\T}{\mathscr{T}} 

\DeclareMathOperator{\Emb}{\mathbb{S}} 
\DeclareMathOperator{\Path}{\mathbb{P}} 
\newcommand{\U}{\mathcal{U}} 
\DeclareMathOperator{\htf}{\mathrm{ht}} 

\newcommand{\Ek}{\mathbb{E}_k} 
\newcommand{\Mk}{\mathbb{M}_k} 
\newcommand{\Pk}{\mathbb{P}_k} 
\newcommand\CC{\mathbb C}

\newcommand{\point}{\bullet}
\newcommand{\sg}{\sigma}
\newcommand{\R}{\mathsf{Str}}
\newcommand{\RS}{\R(\sg)}
\newcommand{\pRS}{\R_{\point}(\sg)}

\newcommand{\pp}{\mathbin{+\mkern-10mu+}}

\newcommand{\ML}{\mathrm{ML}} 

\newcommand{\Fraisse}{Fra\"{i}ss\'{e}}
\newcommand{\EF}{Ehren\-feucht--\Fraisse}
\renewcommand{\epsilon}{\varepsilon}
\renewcommand{\theta}{\vartheta}
\renewcommand{\phi}{\varphi}

\newcommand\ete[1]{\enspace\text{#1}\enspace}
\newcommand\qtq[1]{\quad\text{#1}\quad}
\newcommand\qq[1]{\quad{#1}\quad}

\newcommand\sue{\subseteq}
\newcommand\ar{\mathrm{ar}} 

\newcommand{\w}[1]{\widehat{#1}}
\renewcommand{\o}[1]{\overline{#1}}

\newcommand\sqleq{\sqsubseteq}

\tikzcdset{
    relay arrow/.default = 10pt,
    relay arrow/.style = {
        rounded corners,
        to path = {
            \pgfextra{
                \def\sourcecoordinate{\pgfpointanchor{\tikztostart}{center}}
                \def\targetcoordinate{\pgfpointanchor{\tikztotarget}{center}}
                \pgfmathanglebetweenpoints{\sourcecoordinate}{\targetcoordinate}
                \edef\tempangle{\pgfmathresult}
                \pgftransformrotate{\tempangle}
                \pgfmathifthenelse{#1>0}{\tempangle+90}{\tempangle-90}
                \pgfcoordinate{tempcoord}{\pgfpointanchor{\tikztostart}{\pgfmathresult}}
            }
            (tempcoord)
            -- ([yshift=#1]tempcoord)
            -- ([yshift=#1]tempcoord-|\tikztotarget.center)\tikztonodes
            --(\tikztotarget)
        }
    }
}

\begin{document}
\title{On the Axioms of Arboreal Categories\thanks{This is an extended version of a paper due to appear in the proceedings of the 18th International Workshop on Coalgebraic Methods in Computer Science (CMCS~2026).}}
\author{
Tom\'a\v s Jakl\inst{1}
\and
Luca Reggio\inst{2}}
\authorrunning{T. Jakl, L. Reggio}
\institute{Czech Technical University, Prague, Czech Republic
\\
\email{tomas.jakl@cvut.cz}
\quad
\url{https://tomas.jakl.one}
\\[1em]
\and
Universit\`a degli Studi di Milano, Milan, Italy
\\
\email{luca.reggio@unimi.it}
\quad
\url{https://lucareggio.github.io/}
}
%

\let\oldaddcontentsline\addcontentsline
\renewcommand{\addcontentsline}[3]{}
\let\oldaddtocontents\addtocontents
\renewcommand{\addtocontents}[2]{}

\maketitle

\let\addcontentsline\oldaddcontentsline
\let\addtocontents\oldaddtocontents

%
\begin{abstract}
Arboreal categories were introduced as an axiomatic framework for game comonads, which provide a comonadic view on many model-comparison games in logic. We demonstrate the inadequacy of the axiom stating that paths are connected. We then propose the notion of ``tree-connectedness'' to address this deficiency, and show that all the essential properties of arboreal categories that we are aware of remain valid under this new definition. Furthermore, we show that the path functor is a Street fibration.
\end{abstract}

\setcounter{tocdepth}{1}
\tableofcontents

\section{Introduction}
\label{s:intro}

Game comonads were introduced in \cite{abramsky2017pebbling,AbramskyS18} to give a categorical and comonadic account of several concepts central to finite model theory.
The key insight is that model-comparison games, a basic tool of (finite) model theory,
can be naturally organised into endofunctors on categories of relational structures that carry a comonad structure.
In fact, these endofunctors encode the possible plays or positions in a given relational structure according to the rules of the game, and the relational structure on the set of plays is derived from the winning conditions.
The principal examples are:
\begin{enumerate}[label=(\roman*)]
\item\label{i:pebble} Pebble games, which capture equivalence in the finite-variable fragments of infinitary first-order logic $\mathcal{L}_{\infty,\omega}$; these correspond to \emph{pebbling comonads}~\cite{abramsky2017pebbling}. 
\item\label{i:EF} Ehrenfeucht--Fra\"iss\'e games, which capture equivalence in the fragments of ordinary first-order logic $\mathcal{L}_{\omega,\omega}$ with bounded quantifier rank; these correspond to the \emph{Ehrenfeucht--Fra\"iss\'e comonads} introduced in~\cite{AbramskyS18,AS2021}.
\item\label{i:modal} Bisimulation games, which capture equivalence in (the standard translation of) modal logic with bounded modal depth; these correspond to the \emph{modal comonads} introduced in \emph{op.~cit}.
\end{enumerate}

Over the years, the term \emph{game comonads} has become established for comonads that describe model-comparison games. Further examples include hybrid logic~\cite{AM2022}, restricted conjunction logic~\cite{MS2022,schindling2025requantification}, description logic~\cite{BednarczykUrbanczyk2022comonadicdescriptionlogics}, logics with generalised quantifiers~\cite{conghaile2021game} or guarded quantifiers~\cite{Guarded2021}, and path predicate logic~\cite{FigueiraGoren2025}.

The category of Eilenberg--Moore coalgebras is central to the theory.
First, its structure is used to link game comonads and the corresponding model-comparison games.
Second, coalgebras encode decompositions of relational structures, yielding a correspondence with important combinatorial parameters such as tree depth and tree width, which has been exploited e.g.\ in~\cite{DJR2021,Reggio2021polyadic,abramskyjaklpaine2022}.

To capture the common features of the categories of Eilenberg--Moore coalgebras of game comonads, the axiomatic framework of arboreal categories was proposed in \cite{AR2021icalp,AR2022}.
The axiomatisation builds on the essential insight that cofree coalgebras encode the plays according to the rules of the game and that these plays can be identified externally as embeddings of finite ``linearly ordered'' coalgebras into cofree coalgebras.

Finite linearly ordered coalgebras are abstractly defined as \emph{paths}, i.e.\ objects with a finite chain of subobjects, according to a chosen factorisation system.
The axioms of arboreal categories ensure that paths are well behaved and that a general notion of bisimilarity (due to Joyal, Nielsen and Winskel~\cite{JNW1993}) is equivalent to an abstract notion of back-and-forth game played between \emph{path embeddings}.

In Section~\ref{s:arboreal-old} we recall the definition of arboreal category and show that, while its axioms are sound in the case of items~\ref{i:pebble}-\ref{i:EF} above, one of the axioms is not satisfied by (the coalgebras for) the modal comonads in item~\ref{i:modal}. In fact, the axiom fails for comonads over relational structures with constants, such as pointed Kripke frames.

The problematic axiom states that paths are connected in the sense of Definition~\ref{def:arboreal-cat} below.
This axiom is satisfied in the case of the pebbling and Ehrenfeucht--Fra\"iss\'e comonads, where coalgebras are \emph{forest-ordered} structures that satisfy appropriate properties. The notion of connected object involves considering coproducts, which, at the level of the underlying forest orders, are obtained by forming a disjoint union. In contrast, the coalgebras for the modal comonads are \emph{tree-ordered}; that is, the underlying forest orders have a bottom element. In particular, when forming a coproduct, the bottom elements must be identified. This is essentially what causes the connectedness axiom to fail, as seen in Example~\ref{ex:failure-connectedness}.

In Section~\ref{s:arboreal-new} we amend the definition of arboreal category by replacing the problematic notion of connectedness with a newly introduced notion of \emph{tree-connectedness}.
The main properties of arboreal categories established in \cite{AR2021icalp,AR2022} using the connectedness axiom remain valid if we substitute connectedness with tree-connectedness.
For further details, see Section~\ref{s:posets-subobjects}. We believe that the newly proposed notion of arboreal category (see Definition~\ref{def:arboreal-cat-new}) not only solves an existing deficiency but also leads to a robust structure theory:
\begin{itemize}
\item All examples (known to the authors) of categories of coalgebras for game comonads that are arboreal according to the definition in~\cite{AR2021icalp} are also arboreal according to the new definition.
\item Coalgebras for the modal comonads form arboreal categories according to the new definition (but not according to the one in~\cite{AR2021icalp}).
\item The \emph{path functor} from an arboreal category\footnote{In this and the next bullet point, ``arboreal category'' refers to the new definition.} to the category of trees is a fibration, as we show in Section~\ref{s:path-func}.
\item A broad class of arboreal categories satisfies a representation result. For more information on this and other aspects of the structure theory of arboreal categories, please refer to the ``Future work'' section at the end of the paper.
\end{itemize}

The purpose of this paper is two-fold. First, to introduce the new connectedness axiom and explain where the old axioms are lacking. Second, to introduce the theory of game comonads and arboreal categories to the universal coalgebra community; to this end, we present our main examples from this perspective.

\vspace{1em}
\noindent\emph{Relation to the existing literature on arboreal categories.} Since the introduction of arboreal categories in~\cite{AR2021icalp}, a number of articles on or related to them have appeared. These include~\cite{AR2022,AR2024,reggio2023finitely,ALR2025}. All facts concerning arboreal categories established in these papers remain valid under the newly proposed notion of arboreal category. Section~\ref{s:posets-subobjects} provides more details on how the relevant proofs need to be adapted. The paper \cite{AMS2024} on linear arboreal categories requires further inspection.

\section{Preliminaries}
\label{s:prelim}

We assume the reader is familiar with the basic notions of category theory, such as functors, natural transformations, adjunctions, limits and colimits (see e.g.~\cite{abramskytzevelekos2010introduction} or \cite{awodey2010category}).
In this section, we recall some basic facts about comonads, coalgebras, and factorisation systems.
While the technical focus of the paper is not on comonads, an understanding of this topic is necessary to appreciate the main examples.

\subsection{Comonads and coalgebras}
A \emph{comonad} on a category $\C$ is a tuple $(\CC,\epsilon,\delta)$ where $\CC\colon \C \to \C$ is an endofunctor, and the \emph{counit} $\epsilon\colon \CC \nattrans \Id$ and \emph{comultiplication} $\delta\colon \CC \nattrans \CC^2$ are natural transformations such that the following diagrams commute.
\[
    \begin{tikzcdnat}
        \CC \rar{\delta}\dar[swap]{\delta} & \CC^2 \dar{\delta \CC}  \\
        \CC^2 \rar[swap]{\CC\delta} & \CC^3
    \end{tikzcdnat}
    \qquad
    \begin{tikzcdnat}
        \CC \rar{\delta} \ar{dr}[swap]{\id} & \CC^2 \dar{\epsilon \CC} \\
        & \CC
    \end{tikzcdnat}
    \qquad
    \begin{tikzcdnat}
        \CC \rar{\delta} \ar{dr}[swap]{\id} & \CC^2 \dar{\CC\epsilon} \\
        & \CC
    \end{tikzcdnat}
\]

Given a functor \(F\colon \C \to \C\), an \emph{endofunctor coalgebra} for \(F\) is a pair \((A, \alpha)\) where \(\alpha\) is a morphism of type \(A \to FA\) in \(\C\). Morphisms between coalgebras $(A,\alpha) \to (B,\beta)$ are morphisms $h\colon A \to B$ in \(\C\) such that the following square commutes.
\[
    \begin{tikzcd}
        A \rar{h}\dar[swap]{\alpha} & B \dar{\beta}
        \\
        FA \rar[swap]{F h} & FB
    \end{tikzcd}
\]

On the other hand, given a comonad \(\CC\) on \(\C\), an \emph{Eilenberg--Moore coalgebra} for~\(\CC\) is an endofunctor coalgebra \((A, \alpha)\), i.e.\ a morphism $\alpha\colon A \to \CC A$, such that the following diagrams commute.
\begin{equation*}
    \begin{tikzcd}
        A \dar[swap]{\alpha} \ar{dr}{\id} \\
        \CC A \rar[swap]{\epsilon_A} & A
    \end{tikzcd}
    \qquad
    \begin{tikzcd}
        A \rar{\alpha}\dar[swap]{\alpha} & \CC^2 A \dar{\delta_A}  \\
        \CC A \rar[swap]{\CC\alpha} & \CC^2 A
    \end{tikzcd}
\end{equation*}
We denote by $\EM(\CC)$ the category of Eilenberg--Moore coalgebras for \(\CC\) and coalgebra morphisms between them. The forgetful functor \(\EM(\CC) \to \C\), that sends \((A, \alpha)\) to \(A\), has a right adjoint that sends \(A\) to the cofree coalgebra \((\CC(A), \delta_A)\).
Most coalgebras of interest in this paper are Eilenberg--Moore coalgebras. If no confusion arises, we refer to them simply as \emph{coalgebras}.

\subsection{Factorisation systems}\label{ss:fact-system}

We review some basic facts concerning factorisation systems; more details can be found, e.g., in~\cite[Chapter 14]{adamek2004abstract} and \cite[\S11.2]{Riehl2014}.
A pair of classes of morphisms \((\Q,\M)\) on a category \(\C\) is an \emph{orthogonal factorisation system} if
\begin{enumerate}
    \item\label{ax:d1'} every morphism \(f\) in \(\C\) can be written as \(f = m\cdot q\), where \(q\in \Q\) and \(m\in \M\);
    \item\label{ax:o2} for every \(q\in \Q\) and \(m \in \M\), 
    and for every commutative square as on the left-hand side below,
\[
\begin{tikzcd}
    \bullet \rar{q} \dar & \bullet \dar
    \\
    \bullet \rar{m}      & \bullet
\end{tikzcd}
\qquad
\qquad
\begin{tikzcd}
    \bullet \rar{q} \dar & \bullet \dar \ar[swap]{ld}{d}
    \\
    \bullet \rar{m}      & \bullet
\end{tikzcd}
\]
there exists a unique \emph{diagonal filler}, i.e.\  an arrow \(d\) such that the right-hand diagram above commutes;

    \item\label{ax:i0+ci2} \(\Q\) and \(\M\) are closed under composition with isomorphisms.
\end{enumerate}

Let $(\Q,\M)$ be an orthogonal factorisation system in a category. We refer to $\Q$-morphisms as \emph{quotients} and denote them by $\epi$, and to $\M$-morphisms as \emph{embeddings} and denote them by~$\emb$.
We say that $(\Q,\M)$ is a \emph{proper factorisation system} if every quotient is an epimorphism and every embedding is a monomorphism. Moreover, we say that $(\Q,\M)$ is \emph{stable} if pullbacks of quotients along embeddings exist and are quotients.

We will use the following standard facts about proper factorisation systems \((\Q, \M)\) without further reference (see e.g.~\cite{freyd1972categories,Riehl2014}):
\begin{itemize}
    \item both \(\Q\) and \(\M\) are closed under compositions;
    \item \(\Q\cap \M\) is precisely the class of isomorphisms;
    \item the class \(\M\) is closed under all existing pullbacks along arbitrary morphisms;
    \item \(g\circ f \in \M\) implies \(f\in\M\) and, dually, \(g\circ f \in \Q\) implies \(g\in\Q\).
\end{itemize}

The following result allows us to lift factorisation systems from the base category to the category of coalgebras for a comonad. 

\begin{lemma}
    \label{l:lift-fs}
    Let \(\CC\) be a comonad on a category \(\C\) with a proper factorisation system \((\Q,\M)\). If \(\CC\) preserves embeddings, then \(\EM(\CC)\) admits a proper factorisation system $(\o{\Q}, \o{\M})$ where a morphism of coalgebras $h \colon (A,\alpha) \to (B,\beta)$ is in~$\o{\Q}$ (resp.~$\o{\M}$) if the underlying morphism $h \colon A \to B$ is in~$\Q$ (resp.~$\M$). If, in addition, \((\Q,\M)\) is stable and \(\CC\) sends pullbacks of quotients along embeddings to weak pullbacks, then $(\o{\Q}, \o{\M})$ is also stable.
\end{lemma}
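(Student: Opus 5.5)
The plan is to factor a coalgebra morphism in $\C$ and then equip the middle object with a coalgebra structure using diagonal fillers. Let $h\colon (A,\alpha)\to(B,\beta)$ be a coalgebra morphism and factor the underlying arrow in $\C$ as $h = m\cdot q$ with $q\colon A\epi C$ in $\Q$ and $m\colon C\emb B$ in $\M$. Consider the square
\[
\begin{tikzcd}
A \rar{q}\dar[swap]{\CC q\cdot\alpha} & C \dar{\beta\cdot m}\\
\CC C \rar[swap]{\CC m} & \CC B
\end{tikzcd}
\]
It commutes because $\CC m\cdot\CC q\cdot\alpha=\CC h\cdot\alpha=\beta\cdot h=\beta\cdot m\cdot q$. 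Since $\CC$ preserves embeddings, $\CC m\in\M$, so there is a unique diagonal $\gamma\colon C\to\CC C$ with $\gamma\cdot q=\CC q\cdot\alpha$ and $\CC m\cdot\gamma=\beta\cdot m$. These equations say exactly that $q$ and $m$ are coalgebra morphisms, once we know $(C,\gamma)$ is an Eilenberg--Moore coalgebra.

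To verify the two coalgebra axioms for $\gamma$, I will use that $q$ is epi and then rewrite along the relations just obtained.

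For the counit law,
\[
\epsilon_C\cdot\gamma\cdot q=\epsilon_C\cdot\CC q\cdot\alpha=q\cdot\epsilon_A\cdot\alpha=q,
\]
hence $\epsilon_C\cdot\gamma=\id_C$.

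For coassociativity, both $\delta_C\cdot\gamma\cdot q$ and $\CC\gamma\cdot\gamma\cdot q$ reduce to $\CC^2q\cdot\delta_A\cdot\alpha=\CC^2 q\cdot\CC\alpha\cdot\alpha$, using naturality of $\delta$ and the identity $\CC\gamma\cdot\CC q=\CC^2q\cdot\CC\alpha$. Alternatively, one can post-compose with the mono $\CC^2 m$.

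It then remains to check the defining properties of an orthogonal factorisation system.

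Closure under isomorphisms is clear, since coalgebra isomorphisms are exactly coalgebra morphisms whose underlying arrow is an isomorphism. Properness holds because the forgetful functor is faithful, so it reflects epis and monos.

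For unique diagonal fillers, take a commuting square of coalgebra morphisms with $q\in\o\Q$ and $m\in\o\M$, and let $d$ be the filler in $\C$. We must show that $d$ is a coalgebra morphism. Compose with $\CC m$, which is mono since it lies in $\M$:
\[
\CC m\cdot\CC d\cdot\gamma=\CC(\text{bottom-left arrow})\cdot\gamma .
\]
Compare this with $\CC m\cdot\delta'\cdot d$, and precompose with the epi $q$; chasing the square gives equality. Uniqueness of the filler is inherited from $\C$.

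For stability, let $q\colon (A,\alpha)\epi(B,\beta)$ be in $\o\Q$ and $m\colon(D,\theta)\emb(B,\beta)$ be in $\o\M$, and form the pullback $P$ with projections $p_1\colon P\to A$ and $p_2\colon P\to D$ in $\C$. Then $p_2\in\Q$ by stability in $\C$, and $p_1\in\M$ as a pullback of $m$. The arrows $\alpha\cdot p_1$ and $\theta\cdot p_2$ form a cone over $\CC A\to\CC B\leftarrow\CC D$. The image of the pullback under $\CC$ is a weak pullback, so we get some $\pi\colon P\to\CC P$ with $\CC p_1\cdot\pi=\alpha\cdot p_1$ and $\CC p_2\cdot\pi=\theta\cdot p_2$. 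Moreover $\CC p_1\in\M$ is mono, so $\pi$ is in fact unique.

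The coalgebra laws for $\pi$ follow by post-composing with the jointly monic pair $(p_1,\CC^2p_1)$, which is monic since $p_1$ and $\CC^2 p_1$ are in $\M$.

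Finally, $(P,\pi)$ is a pullback in $\EM(\CC)$. Given coalgebra morphisms $f\colon X\to A$ and $g\colon X\to D$ into the cospan, take the induced map $u\colon X\to P$ in $\C$. It is a coalgebra morphism, as one checks by composing with the mono $\CC p_1$. Hence $p_2$ is an $\o\Q$-pullback.

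The step I expect to be the main obstacle is this stability part. The subtle point is obtaining the structure map from a merely weak pullback and then seeing that mono-ness of $\CC p_1$ rescues uniqueness. Everything else is routine diagram chasing.
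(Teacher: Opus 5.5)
Your proposal is correct and follows the paper's proof. The stability part is the same construction: the structure map comes from the weak pullback, the counit law follows by cancelling the mono $p_1$, coassociativity by cancelling $\CC^2 p_1$, and the universal property by cancelling $\CC p_1$. For the first part the paper only cites an adaptation of Linton's Lemma~2, and your diagonal-filler construction of the structure on the middle object is precisely that argument.

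Two cosmetic points. ``Jointly monic pair'' should simply say that you cancel $p_1$ for one law and $\CC^2 p_1$ for the other. In the orthogonality step, precomposing with $q$ is unnecessary: let $\beta,\gamma$ be the structures on the codomain of $q$ and the domain of $m$, and $\zeta$ the structure on the codomain of $m$. Then $\CC m\cdot\CC d\cdot\beta$ and $\CC m\cdot\gamma\cdot d$ both already equal $\zeta\cdot m\cdot d$, so cancelling the mono $\CC m$ suffices.
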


\begin{prf}
The fact that $(\o{\Q}, \o{\M})$ is a proper factorisation system whenever \(\CC\) preserves embeddings follows from 
an adaptation of~\cite[Lemma 2]{linton1969coequalizers}.

For the second part of the statement, let \(e\colon (A,\alpha) \epi (B, \beta)\) and \(m\colon (S,\sigma) \emb (B, \beta)\) be, respectively, a quotient and an embedding in \(\EM(\CC)\). We denote the pullback of the underlying morphisms in \(\C\) as follows.
    \[
        \begin{tikzcd}
            T \rar[->>]{e'} \dar[>->,swap]{m'} & S \dar[>->]{m}
            \\
            A \rar[->>]{e} & B
        \end{tikzcd}
    \]
    As $\CC$ sends the latter pullback to a weak pullback, there exists a morphism \(\tau\) making the following diagram commute.
    \[
        \begin{tikzcd}
            \CC T \ar[swap]{ddd}{\CC m'}\ar{rrr}{\CC e'} & & & \CC S \ar{ddd}{\CC m}
            \\
            & T \ar[dashed]{lu}{\tau}\rar[->>]{e'} \dar[>->,swap]{m'} & S \dar[>->]{m} \ar[swap]{ru}{\sigma}
            \\
            & A \ar[swap]{ld}{\alpha} \rar[->>]{e} & B \ar{rd}{\beta}
            \\
            \CC A \ar{rrr}{\CC e} & & & \CC B
        \end{tikzcd}
    \]

We claim that \((T,\tau)\) is an Eilenberg--Moore coalgebra.
    By naturality of \(\epsilon\) we see that
    \[
        m' \circ \epsilon_T \circ \tau = \epsilon_A \circ \CC m' \circ \tau = \epsilon_A \circ \alpha \circ m' = m',
    \]
    where the last equality holds because \((A, \alpha)\) is a coalgebra. Hence, since \(m'\) is a monomorphism, we get that \(\epsilon_T \circ \tau = \id\).
Next, to check the equality \(\delta_T \circ \tau = \CC \tau \circ \tau\), we observe that
    \[
        \CC^2 m' \circ \delta_T \circ \tau
        = \delta_A \circ \CC m' \circ \tau
        = \delta_A \circ \alpha \circ m'
        = \CC \alpha \circ \alpha \circ m',
    \]
    where the first equality holds by naturality of \(\delta\), the second by the commutativity of the left face in the above diagram, and the third because \((A, \alpha)\) is a coalgebra. Then, using the commutativity of the left face twice, we have that
    \[
        \CC \alpha \circ \alpha \circ m' = \CC \alpha \circ \CC m' \circ \tau = \CC^2 m' \circ \CC \tau \circ \tau.
    \]
    Therefore, we get \(\CC^2 m' \circ \delta_T \circ \tau = \CC^2 m' \circ \CC \tau \circ \tau\), which gives us the desired equality since \(\CC^2 m'\) is a monomorphisms (just recall that \(\CC\) preserves embeddings).

    It remains to show that \((T,\tau)\) is a pullback of \(m\) and \(e\).
    Suppose that \(h\colon (Z, \zeta) \to (A,\alpha)\) and \(l\colon (Z, \zeta) \to (S, \sigma)\) satisfy \(f \circ h = g \circ l\).
    Since the underlying morphisms commute in \(\C\), there is a \(d\colon Z \to T\) such that \(h = m' \circ d\) and \(l = e' \circ d\).
    To see that \(d\) is a morphism \((Z, \zeta) \to (T, \tau)\), observe that
    \[
        \CC m' \circ \tau \circ d = \alpha \circ m' \circ d = \CC m' \circ \CC d \circ \zeta
    \]
    and, since \(\CC m'\) is a monomorphism, we get the required equality \(\tau \circ d = \CC d \circ \zeta\).
    Uniqueness of $d$ follows from the fact that \(m'\) is a monomorphism.
\end{prf}

\begin{remark}
Note that the definition of an orthogonal factorisation system varies across the literature, e.g.~\cite{Riehl2014} uses a different definition. See \cite[Chapter 14]{adamek2004abstract} and~\cite{carboni1997localization} for statements that imply the equivalence with the definitions in \emph{op.~cit.}
\end{remark}

\subsubsection{The posets of $\M$-subobjects.}
Let $\C$ be a well-powered\footnote{Recall that a category is \emph{well-powered} if the collection of subobjects of each object in the category forms a set rather than a proper class.} category endowed with a proper factorisation system $(\Q,\M)$. 
The poset $\Emb{X}$ of \emph{$\M$-subobjects} of an object $X\in\C$ is defined similarly to how one usually defines the poset of subobjects of \(X\).
Formally, we first define the preorder \(\leq\) on the class of embeddings \(S \emb X\) as follows. For embeddings \(m\colon S \emb X\) and \(n\colon R \emb X\), set
\[
    m \leq n
    \qtq{if}
    \exists h\colon S \to R
    \ete{such that}
    m = n \cdot h.
\]
(If it exists, $h$ is in $\M$.) Then, \(\Emb X\) is the poset reflection of the preorder \(\leq\) of embeddings \(S \emb X\), i.e.\ the poset of equivalence classes \([m] = \{ n\colon R \emb X \mid m \leq n \ete{and} n \leq m\}\) partially ordered by \([m] \leq [n]\) if, and only if, \(m \leq n\).

\section{Game comonads and their categories of coalgebras}
\label{s:game-comonads}

We review the three main game comonads, already mentioned in the \hyperref[s:intro]{Introduction}, from~\cite{AbramskyS18,AS2021}. Experience shows that the intuition gained from working with these usually transfers to other game comonads and arboreal categories.

\subsubsection{Relational structures.}
The three comonads are defined on the category \(\RS\) of \emph{relational structures} in a fixed relational signature \(\sg = \left<R_1, \dots, R_n\right>\) (or, in the case of the modal comonads, on pointed relational structures; see below for more details). For each relational symbol \(R\in \sg\), write \(\ar(R)\in \mathbb N\) for its arity. Then, the objects of \(\RS\) are tuples \(X = (X, R_1^X, \dots, R_n^X)\) where \(X\) is a set and \(R^X \sue X^{\ar(R)}\) for each \(R\in \sg\).
The morphisms of \(\RS\) are homomorphism of relational structures, i.e., functions \(h\colon X\to Y\) between the underlying sets such that, for each \(R\in \sg\),
\begin{equation}
    R^X(x_1,\dots, x_n)
    \qtq{implies}
    R^Y(h(x_1),\dots,h(x_n)).
    \label{eq:homo}
\end{equation}

Similarly, \(\pRS\) is the category of \emph{pointed relational structures}, i.e.\ structures \(X\) with a \emph{distinguished point} \(\point \in X\) which needs to be preserved by homomorphisms.
When \(\sg\)~is a \emph{modal signature}, i.e.\ each relation symbol in \(\sg\) is either unary (representing propositional predicates) or binary (representing transition relations), \(\pRS\) is the category of \emph{pointed Kripke frames}.
Note that Kripke homomorphisms are strictly more general than the usual \emph{p-morphisms} (also called \emph{bounded morphisms} \cite[p.~17]{BvBW2007}). This is necessary as otherwise the counits of our comonads would not be morphisms in the category.

Both \(\RS\) and \(\pRS\) are equipped with the (epi, regular mono) factorisation system. Concretely, this means that quotients are the surjective homomorphisms, and embeddings are the injective homomorphisms that reflect relations (meaning that \eqref{eq:homo} is an equivalence rather than a mere implication).

\paragraph{Coalgebraic view.}
Note that \(\RS\) can be viewed as a category of endofunctor coalgebras for the functor \(F\colon \Set \to \Set\) given by
\begin{equation}
    F(X) = \prod_{R\in \sg} \Pow(X^{\ar(R)-1}).
    \label{eq:signature-to-endofunc}
\end{equation}
However, morphisms in \(\RS\) are just the \emph{lax} coalgebra morphisms, depicted on the left-hand side below.
\begin{equation}
    \begin{tikzcd}
        X \rar{h}\dar[swap]{\xi}\ar[phantom]{rd}{\sue} & X' \dar{\xi'}
        \\
        FX \rar[swap]{F h} & FX'
    \end{tikzcd}
    \qquad
    \qquad
    \begin{tikzcd}[column sep=0.6em]
        & \point
        \ar{ld}
        \ar{rd}
        \\
        X \ar{rr}{h}\ar[swap]{d}{\xi}\ar[phantom]{rrd}{\sue} && X' \dar{\xi'}
        \\
        FX \ar[swap]{rr}{F h} && FX'
    \end{tikzcd}
    \label{eq:lax-morph}
\end{equation}
The ordering is the pointwise subset ordering, given by the power sets in~\eqref{eq:signature-to-endofunc}.
Similarly, when \(\sigma\) is a modal signature, \(\pRS\) corresponds to the category of pointed coalgebras and lax morphisms, as shown on the right-hand side above, for the functor \(F(X) = \Pow(A) \times \Pow(X)^T\) where \(A\) is the set of propositional predicates and \(T\) is the set of (binary) transition symbols.

Embeddings in the (epi, regular mono) factorisation system are the injective lax coalgebra morphisms, as in \eqref{eq:lax-morph}, where \(\xi\) is the largest possible.

\begin{remark}
Although our definitions assume that the signature \(\sg\) is finite and finitary, most results and notions apply equally well to infinite and infinitary signatures. However, subtle issues arise when discussing the relationship between game comonads and logic, and these are best avoided here. 
\end{remark}

\subsubsection{Modal comonad.}
The first comonad we define is the \emph{modal comonad} on pointed Kripke structures. More precisely, for each modal signature~\(\sg\) and each natural number \(k\geq 1\), we define a comonad \(\Mk\) on \(\pRS\) as follows. Given a pointed Kripke structure \(A\in \pRS\), \(\Mk A\) is the \emph{$k$-unravelling} of~\(A\), that is:
\begin{itemize}
    \item The underlying set of \(\Mk A\) consists of sequences
    \[
        \point \xrightarrow{R_{\tau_i}} a_1 \xrightarrow{R_{\tau_2}} a_2 \xrightarrow{R_{\tau_3}} \cdots \xrightarrow{R_{\tau_n}} a_n
    \]
    of length at most \(k\).

    We encode such a sequence as \(s=[a_0,\tau_1,a_1,\tau_2,a_2,\, \dots,\, \tau_n, a_n]\) where \(0\leq n \leq k \), \(a_0\) is the distinguished point \(\point \in A\) and, for every \(i\in \{1, \dots, n\}\), we have \(R_{\tau_i}^A(a_{i-1}, a_i)\). There is a function \(\epsilon_A\colon \Mk A \to A\) that extracts the last element of a sequence, i.e., it sends \([\point,\tau_1,a_1,\, \dots,\, \tau_n, a_n]\) to \(a_n\).

    \item The distinguished point of \(\Mk A\) is the shortest sequence, that is, \([\point]\).
    \item For a unary symbol \(P\in \sg\), define \(P^{\Mk A}\) as \(\{s \mid P^A(\epsilon_{A}(s))\}. \)
          For a binary symbol \(R_{\tau}\in \sg\), define \(R^{\Mk A}_\tau\) as the set of pairs \((s,s \pp [\tau,a])\), with \(a\in A\), such that $R^{A}_{\tau}(\epsilon_A(s),a)$. Here, \(\pp\) denotes the concatenation of sequences.
\end{itemize}
Note that \(\Mk\) is an endofunctor; its action on morphisms is given by
\[
    \Mk(f)\colon [\point,\tau_1,a_1, \, \dots,\, \tau_n, a_n] \mapsto [\point,\tau_1, f(a_1), \, \dots,\, \tau_n, f(a_n)].
\]
The comonad structure is defined as follows:
\begin{itemize}
    \item The counit is the map \(\epsilon_A\colon \Mk A \to A\) defined above.
    \item The comultiplication \(\delta_A\colon \Mk A \to \Mk^{2} A\) sends a sequence to the \emph{sequence of its prefixes}, i.e.\ \([\point,\tau_1,a_1,\, \dots,\, \tau_n, a_n]\) is mapped to
    \[[[\point],\tau_1,[\point, \tau_1, a_1], \tau_2,\, \dots,\, \tau_n, [\point,\tau_1,a_1,\, \dots,\, \tau_n, a_n]].\]
\end{itemize}

Observe that, in general, \(\epsilon_A\) is not a p-morphism since the length of sequences in \(\Mk A\) is bounded by \(k\). However, the structures \(A\) and \(\Mk A\) are indistinguishable in the fragment \(\ML_k\) of modal logic, consisting of formulas of \emph{modal depth} at most \(k\) (i.e., with at most $k$ nested modalities). In particular,
\begin{equation}
    A \equiv_{\ML_k} B
    \qq\iff
    \Mk A \sim \Mk B
    \label{eq:MLk-equiv}
\end{equation}
where \(\sim\) denotes bisimilarity or, equivalently, the existence of a winning strategy of Duplicator in the bisimulation game.

These facts are well known in the literature on modal logic, see e.g.~\cite{hennessy1980observing}, except perhaps for the observation that \(\Mk\) is a comonad. Interestingly, a similar pattern emerges in many other model-comparison games occurring in (finite) model theory.
In the case of other comonads, however, the category of coalgebras typically plays a more prominent role, as it must be explicitly invoked to express bisimilarity.

\subsubsection{\EF{} comonad.}
The next two comonads that we define are on \(\RS\), for a given relational signature \(\sg\).
The first comonad can be viewed as an unravelling of a structure based on the rules of the \EF{} game.
For a fixed \(k\) and a relational structure \(A\in \RS\), define \(\Ek A\) as follows:
\begin{itemize}
    \item The universe of \(\Ek A\) is the set of sequences \([a_1, \dots, a_n]\) where \(1\leq n\leq k\) and, for every \(i\in \{1,\dots, n\}\), \(a_i \in A\).
    \item \(\Ek(f)\colon \Ek A \to \Ek B\) maps \([a_1,\dots,a_n]\) to \([f(a_1),\dots,f(a_n)]\).
    \item \(\epsilon_A\colon \Ek A \to A\) maps \([a_1,\dots,a_n]\) to \(a_n\).
    \item \(\delta_A\colon \Ek A \to \Ek^2 A\)  maps a sequence to the sequence of its prefixes. 
    \item For an \(n\)-ary relation symbol \(R\in \sg\), the relation \(R^{\Ek A}\) consists of the tuples of sequences \((s_1, \dots, s_n)\) such that
    \begin{itemize}
        \item \(s_1,\dots,s_n\) are pairwise comparable in the prefix preorder, and
        \item \(R^A(\epsilon_A(s_1),\dots,\epsilon_A(s_n))\).
    \end{itemize}
\end{itemize}
Similarly to \eqref{eq:MLk-equiv}, we have that
\begin{equation}
    A \equiv_{\mathcal{L}_{\omega,k}^-} B
    \qq\iff
    \Ek A \sim \Ek B
    \label{eq:Ek-equiv}
\end{equation}
where \(\mathcal{L}_{\omega,k}^-\) denotes the fragment of first-order logic consisting of sentences without equality and with \emph{quantifier rank} at most \(k\) (i.e., with at most $k$ nested quantifiers).

This time, however, it is essential to interpret the right-hand side of \eqref{eq:Ek-equiv} in~\(\EM(\Ek)\). Namely, \(\Ek A\) and \(\Ek B\) are the cofree coalgebras on \(A\) and \(B\), respectively, and the relation \(\sim\) is defined internally in \(\EM(\Ek)\), which we discuss in Section~\ref{s:bisim} below. Also note that formulas with equality can be recovered if we work with the extended signature \(\sg \cup \{I\}\) where \(I\) is binary and \(I^A\) is interpreted as the diagonal \(\{ (a,a) \mid a\in A\}\), for details see \cite{AS2021}.

\subsubsection{Pebble comonad.}
The third family of game comonads is based on \emph{pebble games}.
For a fixed \(k\) and a relational structure \(A\in \RS\), define \(\Pk A\) as follows:
\begin{itemize}
    \item The universe of \(\Pk A\) is the set of sequences \([(p_1,a_1), \dots, (p_n,a_n)]\) where, for every \(i\in \{1,\dots, n\}\), \(p_i \in \{1,\dots, k\}\) and \(a_i \in A\).
    \item \(\Pk(f)\colon \Pk A \to \Pk B\) maps \([(p_1,a_1),\dots,(p_n,a_n)]\) to \([(p_1,f(a_1)),\dots,(p_n,f(a_n))]\).
    \item \(\epsilon_A\colon \Pk A \to A\) maps \([(p_1,a_1),\dots,(p_n,a_n)]\) to \(a_n\).
    \item \(\delta_A\colon \Pk A \to \Pk^2 A\) maps a sequence to the sequence of its prefixes.
    \item For an \(n\)-ary relation symbol \(R\in \sg\), the relation \(R^{\Pk A}\) consists of the tuples of sequences \((s_1, \dots, s_n)\) such that
    \begin{itemize}
        \item \(s_1,\dots,s_n\) are pairwise comparable in the prefix preorder;
        \item \(R^A(\epsilon_A(s_1),\dots,\epsilon_A(s_n))\);
        \item if \(s_i\) is a prefix of \(s_j\), i.e.\ \(s_i = [(p_1,a_1), \dots, (p_n,a_n)]\) and \(s_j = s_i \pp [(q_1,a'_1), \dots, (q_r, a'_r)]\), then \(p_n \not\in \{q_1, \dots, q_r\}\).
    \end{itemize}
\end{itemize}
Again, similarly to \eqref{eq:MLk-equiv}, we have that
\begin{equation}
    A \equiv_{\mathcal{L}_{\infty,\omega}^{-,k}} B
    \qq\iff
    \Pk A \sim \Pk B
    \label{eq:Pk-equiv}
\end{equation}
where \(\mathcal{L}_{\infty,\omega}^{-,k}\) denotes the fragment of infinitary first-order logic $\mathcal{L}_{\infty,\omega}$ consisting of sentences without equality that use at most \(k\) distinct variables. This fragment is pivotal in finite model theory~\cite{Libkin2004,GradelKLMSVVW07}.
For example, the existence of a path of length 4 in a graph can be expressed by the following sentence in \(\mathcal{L}_{\infty,\omega}^{-,2}\):
\[\exists x_1 (\exists x_2. R(x_1, x_2) \land (\exists x_1. R(x_2, x_1) \land (\exists x_2. R(x_1, x_2) \land (\exists x_1. R(x_2, x_1))))).\]

\subsection{The categories of coalgebras}
\label{s:coalg-concretely}
For the following, it is useful to recall from~\cite{AS2021} the concrete description of the categories of coalgebras of our running examples of game comonads.
To this end, we say that a poset \((F, \leq)\) is a \emph{forest} if, for every \(x\in F\), the set 
\[\down\, x \coloneqq \{y \in F\mid y\leq x\}\] is a finite chain. A function between forests \(f\colon F\to F'\) is a \emph{forest morphism} if it is monotone and the chains \(\down\, x\) and \(\down\, f(x)\) have the same length for each \(x\in F\).

If $x$ is an element of a forest $F$, its \emph{height}, denoted by $\htf(x)$, is the cardinality of the set $\down \, x \setminus \{x\}$.
The elements of height $0$ are called \emph{roots}.
A \emph{tree} is a forest with precisely one root.
The \emph{height of the forest} $F$ is the supremum of the set \(\{\htf(x) \mid x\in F\}\) in \(\mathbb{N}\cup\{\infty\}\).

The coalgebras for our three game comonads have an inherent forest order. Namely, for a comonad \(\CC\) among \(\Mk, \Ek, \Pk\) and a coalgebra \(\alpha\colon A \to \CC A\), set
\[
    a \sqleq_\alpha  b
    \qq\iff
    \ete{the sequence} \alpha(a) \ete{is a prefix of} \alpha(b).
\]
Then, the poset \((A, \sqleq_\alpha)\) is a forest and coalgebra morphisms are forest morphisms.
This allows us to give a concrete characterisation of the category \(\EM(\CC)\) as a class of forest-ordered structures:
\begin{itemize}
    \item Since \(\Mk\) is an idempotent comonad, the category \(\EM(\Mk)\) is isomorphic to the full subcategory of \(\pRS\) consisting of \emph{synchronisation trees}~\cite{JNW1996} of height \({\leq}\, k\), i.e.\ tree-ordered $\sg$-structures \((A, \leq)\) such that \(a\) is the parent of \(b\) just when \(R^A(a,b)\) for some relation \(R\) in \(\sg\).
In particular, Kripke morphisms between synchronisation trees are automatically forest morphisms.

    \item The objects of \(\EM(\Ek)\) can be described as forest-ordered $\sg$-structures \({(A,\leq)}\) of height \({\leq}\, k\) such that if \(R^A(x_1,\dots, x_n)\), then \(x_1, \dots, x_n\) are pairwise comparable in the order \(\leq\).
    The latter condition follows from the definition of the relation \(R^{\Ek A}\) given above.

    Coalgebra morphisms correspond precisely to homomorphisms of the underlying relational structures that are also forest morphisms.

    \item Objects in \(\EM(\Pk)\) can be described as forest-ordered $\sg$-structures \({(A, \leq, p)}\) with a ``pebbling function'' \(p\colon A \to \{1, \dots, k\}\) such that \(R^A(x_1,\dots, x_n)\)
    implies that (i) \(x_1, \dots, x_n\) are pairwise comparable and (ii) if \(x_i < x_j\) then \(p(x_i) \not\in \{ p(y) \mid x_i < y \leq x_j\}\).
    Morphisms in \(\EM(\Pk)\) are the homomorphisms of relational structures that preserve the pebbling functions and are also forest morphisms.
\end{itemize}

These categories of coalgebras are very well-behaved. For instance, they are complete and cocomplete, admit an (epi, regular mono) factorisation system, and are even locally finitely presentable. Cf.~\cite{DJR2021,jakl2023categorical,Reggio2021polyadic}.

\subsection{Game comonads and logic}
\label{s:bisim}

Using the concrete characterisation of the category \(\EM(\CC)\), with \(\CC\) one of the game comonads $\Mk$, $\Ek$, or $\Pk$, we can express notions familiar from the universal coalgebra literature.
To this end, we say that a coalgebra \((A,\alpha)\) is a \emph{path} if it is a finite chain in the induced order \(\sqleq_\alpha\).
Then, following~\cite{JNW1996}, we say that a morphism \(f\colon X \to Y\) in \(\EM(\CC)\) is \emph{open} if any square
\[
    \begin{tikzcd}
        P\dar\rar & R \dar
        \\
        X \rar{f} & Y
    \end{tikzcd}
\]
with \(P, R\) paths has a diagonal filler \(R \to X\), making the ensuing diagram commute. Open morphisms can be regarded as an abstraction of p-morphisms.

Using these notions, we can introduce three types of equivalence in~\(\EM(\CC)\).
For coalgebras \(X\) and \(Y\), we say that they are
\begin{itemize}
    \item \emph{back-and-forth equivalent} if there is a bisimulation \(B \sue X\times Y\) (in the sense of \cite[p.~14]{BvBW2007}) with respect to the Kripke structures \((X,\prec)\) and \((Y,\prec)\), where \(\prec\) denotes the immediate-successor relation in the forest order, and for each \mbox{\((x,y)\in B\)}, the induced paths \(\down\, x\sue X\) and \(\down\, y\sue Y\) are isomorphic;
    \item \emph{bisimilar} if there is a span of open morphisms \(X \leftarrow Z \to Y\).
\end{itemize}

If the relations \(\sim\) appearing in \eqref{eq:MLk-equiv}, \eqref{eq:Ek-equiv} and \eqref{eq:Pk-equiv} are set to be the back-and-forth equivalence, then the latter equations express the well known fact that the existence of a Duplicator winning strategy in the \(k\)-round bisimulation, \(k\)-round \EF{} and \(k\)-pebble games, respectively, corresponds precisely to the logical equivalence on the left-hand side of these equations.

Furthermore, in the setting of arboreal categories, which we introduce below, back-and-forth equivalence and bisimilarity coincide, as shown in \cite{AR2024}.

\section{Arboreal categories and the failure of connectedness}
\label{s:arboreal-old}

The \emph{raison d'être} of arboreal categories is to provide a general language for working with back-and-forth equivalence, bisimilarity, and so forth, for categories of coalgebras for game comonads. This is achieved uniformly, without relying on any specific feature of the comonads or the underlying category of structures.

An essential ingredient in the definition of arboreal categories is an axiomatic notion of path, which extends the concrete notion of path considered in Section~\ref{s:bisim} to any category with a proper factorisation system.

\begin{dfn}\label{def:paths}
Let \(\C\) be a well-powered category equipped with a proper factorisation system \((\Q,\M)\).
    An object $X$ of $\C$ is a \emph{path} if its poset of \(\M\)-subobjects~$\Emb{X}$ is a finite linear order. If $P$ is a path, its \emph{height}, denoted by $\htf(P)$, is the height of the finite chain \(\Emb P\).
    A \emph{path embedding} is an embedding $P\emb X$ whose domain is a path. 
\end{dfn}

The following is the definition of an arboreal category as given in~\cite{AR2021icalp}.

\begin{definition}[``Old definition'' of arboreal category]\label{def:arboreal-cat}
    Let \(\A\) be a well-powered category equipped with a stable proper factorisation system.
    We say that \(\A\) is \emph{arboreal} if it satisfies the following conditions:
\begin{description}\itemsep4pt
\item[Paths are connected]\label{ax:connected} Coproducts of sets of paths exist in $\A$ and each path $P$ is \emph{connected}; that is, every arrow $P\to \coprod_{i\in I}{Q_i}$ into a coproduct of a non-empty set of paths $\{Q_{i}\mid i\in I\}$ factors through some coproduct arrow $Q_{j}\to \coprod_{i\in I}{Q_i}$. 

\item[2-out-of-3 property] Given arrows $f\colon P\to Q$ and $g\colon Q\to R$ between paths, if $g\circ f$ is a quotient, then so is $f$.\footnote{This condition is equivalent to saying that if any two of $f$, $g$ and $g\circ f$ are quotients, then so is the third; see \cite[Remark~3.9]{AR2022}. Hence the term ``2-out-of-3 property''.}

\item[Path-generation]\label{i:path-gen}
    The full subcategory of paths is dense in \(\A\).\footnote{
        A category equipped with a stable proper factorisation system satisfies the path-generation property if, and only if, each object is \emph{path-generated}; that is, for each object~$X$, the cocone consisting of all path embeddings $P \emb X$ is a colimit cocone
        \cite[Lemma~5.1 and Remark~5.2]{AR2022}.
    }
\end{description}
\end{definition}

The paper \cite{AR2022}, in which arboreal categories were introduced, laid out the foundations for an axiomatic approach to logical equivalences in finite model theory. In fact, even game comonads introduced later still fit within the scope of arboreal categories, highlighting the robustness of this concept.
Consequently, results proved in the general setting of arboreal categories (cf.\ e.g.~\cite{ALR2025,AMS2024,AR2024,jakl2023categorical,reggio2023finitely}) also apply to these newly introduced comonads.

However, there is an issue. The categories of coalgebras for game comonads defined over \emph{pointed} Kripke structures may fail to be arboreal. This mistake has gone unnoticed until now. Below, we demonstrate that the connectedness axiom generally fails for modal comonads.

Firstly, we apply Lemma~\ref{l:lift-fs} to show that the (epi, regular mono) factorisation system of \(\pRS\) induces a stable proper factorisation system on the category $\EM(\Mk)$. It is straightforward to check that $\Mk$ preserves regular monomorphisms, so it remains to show that it sends pullbacks of quotients along embeddings to weak pullbacks.

\begin{lemma}
    \(\Mk\) sends pullbacks to weak pullbacks.
\end{lemma}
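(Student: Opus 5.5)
We will argue concretely in \(\pRS\), where pullbacks are computed on underlying sets. Given a cospan \(f\colon A\to C\), \(g\colon B\to C\), its pullback \(P\) has universe \(\{(a,b)\in A\times B \mid f(a)=g(b)\}\) and distinguished point \((\point,\point)\). A unary predicate holds at \((a,b)\) iff it holds at both \(a\) and \(b\), and \(R_\tau^P((a,b),(a',b'))\) iff \(R_\tau^A(a,a')\) and \(R_\tau^B(b,b')\). Applying \(\Mk\) yields the square with \(\Mk\pi_A\colon \Mk P\to \Mk A\) and \(\Mk\pi_B\colon \Mk P\to\Mk B\). To show this square is a weak pullback, we take any pair of homomorphisms \(h\colon Z\to \Mk A\) and \(l\colon Z\to\Mk B\) with \(\Mk f\circ h=\Mk g\circ l\), and produce a (not necessarily unique) homomorphism \(d\colon Z\to\Mk P\) with \(\Mk\pi_A\circ d=h\) and \(\Mk\pi_B\circ d=l\).

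First we build a canonical comparison map \(\phi\) from the set-theoretic pullback of \(\Mk f\) and \(\Mk g\) into \(\Mk P\). Let \(s=[\point,\tau_1,a_1,\dots,\tau_n,a_n]\in\Mk A\) and \(t\in\Mk B\) satisfy \(\Mk f(s)=\Mk g(t)\). Since \(\Mk f\) and \(\Mk g\) preserve lengths and transition labels, \(t\) has the form \([\point,\tau_1,b_1,\dots,\tau_n,b_n]\) with the same labels \(\tau_i\), and \(f(a_i)=g(b_i)\) for each \(i\). We then set
\[\phi(s,t)=[(\point,\point),\tau_1,(a_1,b_1),\dots,\tau_n,(a_n,b_n)].\]
This is a valid element of \(\Mk P\), because \(R_{\tau_i}^A(a_{i-1},a_i)\) and \(R_{\tau_i}^B(b_{i-1},b_i)\) together give \(R_{\tau_i}^P\) between consecutive pairs. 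It is also clear that \(\Mk\pi_A(\phi(s,t))=s\) and \(\Mk\pi_B(\phi(s,t))=t\); in fact \(\Mk P\) is exactly the set-theoretic pullback of \(\Mk f\) and \(\Mk g\).

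We then define \(d(z)=\phi(h(z),l(z))\) and verify that \(d\) is a homomorphism of pointed structures. The point is preserved because \(h\) and \(l\) send it to \([\point]\). For a unary \(P_0\), if \(P_0^Z(z)\) then \(P_0^A\) holds at the last element of \(h(z)\) and \(P_0^B\) holds at the last element of \(l(z)\), so the predicate holds at the last pair of \(d(z)\). For a binary \(R_\tau\) with \(R_\tau^Z(z,z')\), we have \(h(z')=h(z)\pp[\tau,a]\) with \(R^A_\tau(\epsilon(h(z)),a)\), and likewise \(l(z')=l(z)\pp[\tau,b]\). Hence \(d(z')=d(z)\pp[\tau,(a,b)]\), where \(R^P_\tau\) holds between the last pair of \(d(z)\) and \((a,b)\).

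We expect no serious obstacle. The one point that needs care is that the transition labels \(\tau_i\) in \(h(z)\) and \(l(z)\) agree, which is exactly what makes \(\phi\) well defined; this is guaranteed because \(\Mk f\) and \(\Mk g\) preserve labels and the two images coincide. The argument in fact shows that \(\Mk\) preserves pullbacks, which is stronger than the stated claim.
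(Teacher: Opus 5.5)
Your proof is correct and follows essentially the same route as the paper. You use that $\Mk f$ and $\Mk g$ preserve lengths and transition labels to pair up the entries of $h(z)$ and $l(z)$ componentwise into an element of $\Mk$ of the pullback, and you carry out the homomorphism checks that the paper leaves as immediate. Your extra remark is also right: the pairing is a bijection, so the mediating map is unique and $\Mk$ in fact preserves pullbacks, though only the weak version is needed for Lemma~\ref{l:lift-fs}.
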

\begin{prf}
    Let \(f\colon A \to B\) and \(g\colon C \to B\) be homomorphisms in \(\pRS\), and let their pullback be as shown below.
    \[
        \begin{tikzcd}
            D \rar{f'} \dar[swap]{g'} & C \dar{g}
            \\
            A \rar{f} & B
        \end{tikzcd}
    \]
    We assume without loss of generality that \(D\) is the substructure of \(A \times C\) consisting of the pairs \((a,c)\) such that \(f(a) = g(c)\).

    Suppose there is a pointed Kripke structure \(Z\), and homomorphisms \(l\colon Z \to \Mk C\) and \(h\colon Z\to \Mk A\), such that the outer diagram below commutes.
    \[
        \begin{tikzcd}
            Z \ar[bend left=45]{rrd}{l}
              \ar[bend right=45,swap]{ddr}{h}
              \ar[dashed]{rd}{d}
            \\
            & \Mk D \rar{\Mk f'} \dar[swap]{\Mk g'} & \Mk C \dar{\Mk g}
            \\
            & \Mk A \rar[swap]{\Mk f} & \Mk B
        \end{tikzcd}
    \]
    We define a map \(d\colon Z\to \Mk D\). For any \(z\in Z\), we have that
    \[
        h(z) = [\bullet, \tau_1, a_1, \dots, \tau_n, a_n]
        \qtq{and}
        l(z) = [\bullet, \phi_1, c_1, \dots, \phi_m, c_m].
    \]
    Since the outer diagram above commutes, we get \(\Mk f(h(z)) = \Mk g(l(z))\). So, \(m = n\) and, for every \(i\in \{1, \dots, n\}\), \(\tau_i = \phi_i\) and \(f(a_i) = g(c_i)\).
    Therefore, we can define \(d(z) \coloneqq [\bullet, \tau_1, (a_1, c_1), \dots, \tau_n, (a_n, c_n)]\).
    It is immediate from the definition of \(d\) that the above diagram commutes.
    Finally, since both \(h\) and~\(l\) are homomorphisms, so is \(d\).
\end{prf}

\begin{remark}
The above argument can be adapted to show that the factorisation systems on $\EM(\Ek)$ and $\EM(\Pk)$ induced by the (epi, regular mono) factorisation system on $\RS$ are also stable proper factorisation systems.
\end{remark}

The paths in $\EM(\Mk)$ are precisely the synchronisation trees consisting of a single branch.
The next example shows that paths in $\EM(\Mk)$ need not be connected. Therefore, the category $\EM(\Mk)$ generally fails to be arboreal according to Definition~\ref{def:arboreal-cat}.

\begin{example}\label{ex:failure-connectedness}
Suppose the signature $\sigma$ contains two distinct unary relation symbols $r,s$. We define $\sigma$-structures $P_{r}, P_{s}, P_{r,s}$ based on a one-element universe: 
\begin{itemize}
\item $P_{r}$ consists of an element that satisfies $r$ but not $s$;
\item $P_{s}$ consists of an element that satisfies $s$ but not $r$;
\item $P_{r,s}$ consists of an element that satisfies both $r$ and $s$.
\end{itemize}
The $\sigma$-structures $P_{r}, P_{s}, P_{r,s}$, with their unique element as the distinguished element, are paths in $\EM(\Mk)$ and satisfy $P_{r,s} \cong P_r + P_s$ in $\EM(\Mk)$. However, the isomorphism $P_{r,s}\to P_r + P_s$ does not factor via either of the inclusions $P_r \to P_r + P_s$ or $P_s \to P_r + P_s$. Hence, $P_{r,s}$ is a path that is not connected according to Definition~\ref{def:arboreal-cat}.
\end{example}

\begin{remark}
 The failure of connectedness of paths in Example~\ref{ex:failure-connectedness} is related to the presence of a distinguished point, and does not apply to the \EF{} or pebble comonads. In particular, \(\EM(\Ek)\) and \(\EM(\Pk)\) are arboreal according to Definition~\ref{def:arboreal-cat}.
    On the other hand, we expect that connectedness of paths will also fail for other game comonads defined on pointed relational structures, such as the hybrid \cite{AM2022}, PPML~\cite{FigueiraGoren2025} and pebble-relation~\cite{MS2022} comonads.
\end{remark}

\section{Arboreal categories redefined: tree-connectedness}\label{s:arboreal-new}

We propose amending the definition of arboreal category by replacing the connectedness condition with a \emph{tree-connectedness} condition. The 2-out-of-3 and path-generation axioms remain unchanged.

\begin{definition}
Let $\C$ be a category equipped with a stable proper factorisation system. A \emph{tree-diagram} in $\C$ is a functor $D\colon T\to \C$ where $T$ is a tree order, and the image of $D$ consists of embeddings in $\C$. We say that $D$ is a \emph{tree-diagram of paths} if, in addition, its image consists of embeddings between paths.
\end{definition}

\begin{definition}[``New definition'' of arboreal category]\label{def:arboreal-cat-new}
    Let \(\A\) be a well-powered category equipped with a stable proper factorisation system.
    We say that \(\A\) is \emph{arboreal} if it satisfies the following conditions:
\begin{description}\itemsep4pt
\item[Paths are tree-connected]\label{ax:tree-connected} $\A$ admits an initial object and colimits of tree-diagrams of paths, and each path $P$ is \emph{tree-connected}; i.e., for every tree-diagram of paths ${D\colon T\to \A}$ and every arrow $f\colon P\to \colim{D}$, there is a least $t\in T$ such that $f$ factors through the colimit arrow $D(t)\to \colim{D}$. 

\item[2-out-of-3 property] Given arrows $f\colon P\to Q$ and $g\colon Q\to R$ between paths, if $g\circ f$ is a quotient, then so is $f$.

\item[Path-generation] The full subcategory of paths is dense in \(\A\).
\end{description}
\end{definition}

The diagram of paths in Example~\ref{ex:failure-connectedness}, which showed the failure of connectedness of paths in the category $\EM(\Mk)$, is not a tree-diagram of paths and cannot be turned into one.
In fact, we have the following:

\begin{lemma}
Every path in the category $\EM(\Mk)$ is tree-connected.
\end{lemma}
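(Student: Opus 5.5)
We will compute colimits of tree-diagrams of paths concretely in $\EM(\Mk)$, viewed as the category of synchronisation trees of height ${\leq}\,k$. Let $D\colon T\to\EM(\Mk)$ be a tree-diagram of paths with root $r\in T$. Each $D(t)$ is a single-branch synchronisation tree, and each $D(s\leq t)$ is an embedding. A forest morphism must preserve heights, and the distinguished point must be preserved, so $D(s\leq t)$ identifies $D(s)$ with an initial segment of the branch $D(t)$, with the same labels and unary predicates. The first step is to show that $\colim D$ is the synchronisation tree obtained by gluing the branches $D(t)$ along these initial-segment inclusions. Concretely, $\colim D=\coprod_t D(t)/{\approx}$, where $\approx$ is generated by $x\approx D(s\le t)(x)$. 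The unary predicates and transitions are the unions of the images. To check this, we note that, since $T$ has a bottom element, all roots of the $D(t)$ are identified, so the result is a tree with distinguished point the common root. The height bound is inherited. We also need that the relations do not ``merge'' badly: two elements get identified only when they lie in a common initial segment (use that $T$ is a tree, so $\approx$-classes of an element of $D(t)$ are determined by the least $s\le t$ whose image contains it). With this description, the universal property is immediate, because a cocone is determined on elements. One also checks that the colimit maps $\iota_t\colon D(t)\to\colim D$ are embeddings.

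The second step is the factorisation. Take a path $P$ and a morphism $f\colon P\to\colim D$. Since $f$ is a forest morphism preserving the root, its image is a branch $\down\, x$, where $x$ is the image of the top of $P$. For each element $y$ of $\colim D$, let $t(y)$ be the least $t\in T$ such that $y\in\operatorname{im}\iota_t$. This element exists: the set of such $t$ is nonempty and downward closed within the upward closure (by the gluing description it is an up-set of $T$ generated by a single element). Set $t_0:=t(x)$. The branch $\down\, x$ is contained in $\operatorname{im}\iota_{t_0}$, because $\iota_{t_0}$ has downward-closed image. Since $\iota_{t_0}$ is an embedding (injective and relation-reflecting), $f$ factors through $\iota_{t_0}$, and the factor is a homomorphism. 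Finally, if $f$ factors through $\iota_t$, then $x\in\operatorname{im}\iota_t$, so $t_0\le t$; hence $t_0$ is least, and moreover every $t\ge t_0$ works. Combined with the initial object (the one-point structure with empty relations, which is also the path of height $0$), this gives the statement.

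The main obstacle is the first step: making precise that $x$ lies in the image of $\iota_t$ exactly for $t$ in a principal up-set $\uparrow t(x)$ of $T$. This requires an induction over the generating zig-zags of $\approx$, using that $T$ is a tree order, that all the maps are initial-segment inclusions, and that $\approx$ never identifies elements of different heights. I would first prove that the zig-zags can be replaced by ones going down to the meet in $T$, which exists for comparable-below elements in a tree. The existence of meets of pairs in a tree with a root is exactly what fails in Example~\ref{ex:failure-connectedness}.
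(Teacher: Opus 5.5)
Your proposal is correct and follows essentially the same route as the paper. Both arguments compute $\colim D$ by gluing the branches $D(t)$ along initial segments, check that the colimit maps are embeddings, note that the image of $f$ is a branch $\down x$, and take the least $t$ whose colimit map has $x$ in its image. The differences are minor: you work with $f$ directly rather than its $(\Q,\M)$-factorisation, and you spell out (via the zig-zag/meet argument) that $\{t \mid x\in\operatorname{im}\iota_t\}$ is a principal up-set, which the paper compresses into the remark that $T$ is a meet-semilattice whose elements have finite height.
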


\begin{prf}
Let $D\colon T\to \EM(\Mk)$ be a tree-diagram of paths in $\EM(\Mk)$, with $T$ a tree. First, note that for every $u\in T$ the colimit map $i_{u}\colon D(u)\to \colim{D}$ is an embedding. It is injective because it is a forest morphism from a chain. To see that it reflects unary relations, observe that the full inclusion $\EM(\Mk) \to \mathcal \pRS$ is left adjoint and so it preserves colimits. This means that $\colim{D}$ is computed by first taking the coproduct of the $\sigma$-structures $D(u)$, for $u\in T$, which is given by glueing the structures $D(u)$ at their distinguished elements, and then quotienting according to the embeddings in the diagram. Hence, if $x\in D(u)$ and $i_{u}(x)$ satisfies a relation $r$, there is $x'\in D(u')$ that satisfies $r$ and such that $i_{u}(x)=i_{u'}(x')$. But this can only happen if there are $t\leq u,u'$ and $y\in D(t)$ such that $D(t\leq u)(y)=x$ and $D(t\leq u')(y)=x'$. Since $D(t\leq u)$ and $D(t\leq u')$ are embeddings, $x$ satisfies $r$ because $x'$ does.

Now, consider an arrow $f\colon P\to \colim{D}$ with $P$ a path, and decompose it as a quotient followed by an embedding:
\[\begin{tikzcd}
P \arrow[twoheadrightarrow]{r}{e} & Q \arrow[rightarrowtail]{r}{m} & \colim{D} 
\end{tikzcd}\]
As $P$ is a path, i.e.\ it consists of a single branch of height at most $k$, so is $Q$. The image of the homomorphism $m$ is of the form $\down\, x = \{y\in \colim{D}\mid y\leq x\}$ for a unique element $x\in \colim{D}$. Since every arrow in the image of $D$ is an embedding, there is $t_{0}\in T$ such that the image of the colimit map $D(t_{0})\emb \colim{D}$ contains $\down \, x$ as a substructure. Thus, $m$ factors through the embedding $D(t_{0})\emb \colim{D}$. Because $T$ is a meet-semilattice in which every element has finite height, there is a least $t\in T$ such that $m$ factors through the colimit map $D(t)\emb \colim{D}$. It follows that $t$ is also the least element of $T$ such that $f$ factors through $D(t)\emb \colim{D}$.
\end{prf}

A similar argument shows that paths in the categories of coalgebras for the pebbling and Ehrenfeucht--Fra\"iss\'e comonads are tree-connected. Since the categories $\EM(\Mk)$, $\EM(\Ek)$ and $\EM(\Pk)$ are cocomplete (in particular, they admit colimits of tree-diagrams of paths), we conclude that:

\begin{proposition}
The categories $\EM(\Mk)$, $\EM(\Ek)$ and $\EM(\Pk)$ are arboreal in the sense of Definition~\ref{def:arboreal-cat-new}.
\end{proposition}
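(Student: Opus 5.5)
We check the axioms of Definition~\ref{def:arboreal-cat-new} one at a time for each $\CC\in\{\Mk,\Ek,\Pk\}$, using the concrete description of $\EM(\CC)$ in Section~\ref{s:coalg-concretely} throughout.

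\emph{Ambient structure.} Each $\EM(\CC)$ is well-powered, since its objects are structures on sets. By Lemma~\ref{l:lift-fs}, together with the lemma that $\Mk$ sends pullbacks to weak pullbacks (and the remark extending this to $\Ek,\Pk$), the (epi, regular mono) factorisation system lifts to a stable proper factorisation system. The categories are cocomplete, so they have an initial object and colimits of tree-diagrams of paths. The initial object is the empty forest for $\Ek,\Pk$, and the one-point tree with no unary predicates for $\Mk$.

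\emph{Tree-connectedness.} For $\Mk$ this is the preceding lemma. For $\Ek$ and $\Pk$ I would follow the same argument. Colimits are computed in $\RS$, since the inclusion is a left adjoint, so each colimit injection of a tree-diagram of paths is an embedding. An arrow $P\to\colim D$ factors as $P\epi Q\emb\colim D$ with $Q$ a chain $\down\,x$. This chain lies in the image of some $D(t_0)$, and since $T$ is a meet-semilattice of finite height there is a least such $t$.

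One wrinkle is that without a bottom point, the colimit is a disjoint union of glued trees. A chain $\down\,x$ lives in one component, so the argument is unchanged. Also, since $T$ has a root, every $D(u)$ receives an embedding from $D(\text{root})$, and this pins down the gluing.

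\emph{2-out-of-3.} Paths are finite chains, and morphisms are forest morphisms, so they preserve height. Hence a morphism between paths is injective and is determined by height. Then $g\circ f$ surjective forces $|P|\ge|R|\ge|Q|$, and with injectivity of $f$ this makes $f$ a bijection, hence surjective, i.e.\ a quotient.

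\emph{Path-generation.} Using the footnote's reformulation, I would show that for each coalgebra $X$ the cocone of all path embeddings $P\emb X$ is a colimit. Every element $x$ lies in the path $\down\,x$ with the induced structure, which is an embedding. For $\Pk$ the pebbling function restricts, and for $\Mk$ the chain contains the root. Every tuple in a relation consists of pairwise comparable elements, so it lies in a single chain $\down\,x_{\max}$; for $\Mk$ this applies to parent--child pairs. Given a compatible cocone $P\to Y$, the maps glue to a unique function $X\to Y$. It is a homomorphism because each relation tuple lives in one path. It is a forest morphism because heights are computed inside $\down\,x$.

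The main obstacle I expect is this last step. I must check that compatibility over all path embeddings, not just the maximal chains, forces well-definedness: the values on $x$ from different paths through $x$ must agree. They do, because $\down\,x$ embeds into each such path. I also expect the pebbling side-condition (ii) for $\Pk$ to need care when forming induced substructures.
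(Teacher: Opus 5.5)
Your proposal is correct and follows the paper's route. The stable proper factorisation system comes from Lemma~\ref{l:lift-fs} together with the weak-pullback lemma and the remark after it, cocompleteness supplies the initial object and the colimits of tree-diagrams, and tree-connectedness for $\Ek$ and $\Pk$ is obtained by rerunning the argument given for $\Mk$. One small correction: for $\Ek,\Pk$ it is the forgetful functor $\EM(\CC)\to\RS$, not an inclusion, that is the colimit-preserving left adjoint, but the conclusion is the same. The only difference is that you check the 2-out-of-3 and path-generation axioms directly from the concrete forest description, whereas the paper treats these unchanged axioms as already established in the earlier arboreal-category literature.
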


Regarding the relationship between connectedness and tree-connectedness, the following fact holds:

\begin{lemma}\label{l:old-vs-new-connectedness}
Let $\C$ be a category equipped with a stable proper factorisation system and admitting an initial object $\zero$. The following statements are equivalent:
\begin{enumerate}
\item\label{i:zero-no-proper-quo} $\zero$ has no proper quotients, i.e.\ any quotient with domain $\zero$ is an isomorphism;
\item\label{i:unique-arrow-emb} for every object $X\in \C$, the unique arrow $\zero\to X$ is an embedding.
\end{enumerate}
If either (and thus both) of these conditions are satisfied, and $\C$ is cocomplete, then any tree-connected object of $\C$ is connected.
\end{lemma}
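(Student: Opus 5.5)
The equivalence is proved by factorising the unique arrows out of $\zero$; the second part is proved by exhibiting any coproduct of paths as the colimit of a suitable tree-diagram and applying tree-connectedness to it.

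\emph{From (1) to (2).} Fix $X$ and factor the unique arrow $\zero\to X$ as a quotient $e\colon\zero\epi Z$ followed by an embedding $m\colon Z\emb X$. By (1), $e$ is an isomorphism. Hence $\zero\to X$ is $m\circ e$, an embedding composed with an isomorphism, and therefore an embedding, since $\M$ is closed under composition with isomorphisms.

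\emph{From (2) to (1).} Let $q\colon\zero\epi X$ be a quotient. By (2), the unique arrow $\zero\to X$ is an embedding. Since $\zero$ is initial, this arrow equals $q$. So $q\in\Q\cap\M$, and $q$ is an isomorphism.

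\emph{Tree-connected implies connected.} Assume the equivalent conditions hold and $\C$ is cocomplete. Let $P$ be tree-connected, let $\{Q_i\mid i\in I\}$ be a non-empty set of paths, and let $f\colon P\to\coprod_i Q_i$.
\begin{itemize}
\item Take $T=I\cup\{\bot\}$, ordered so that $\bot$ is the root and the elements of $I$ are pairwise incomparable leaves above it. This is a tree of height $\le 1$.
\item Define $D\colon T\to\C$ by $D(\bot)=\zero$ and $D(i)=Q_i$. The arrows $D(\bot\le i)$ are the unique maps $\zero\to Q_i$, which are embeddings by (2).
\item The object $\zero$ is a path: it has a unique subobject, $\id_\zero$, because any embedding $S\emb\zero$ is split by $\zero\to S$. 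So $D$ is a tree-diagram of paths.
\item The colimit of $D$ is the coproduct $\coprod_i Q_i$. Indeed, a cocone over $D$ is just a family of maps $Q_i\to Y$: the component at $\zero$ is forced, and the compatibility conditions hold automatically by initiality. The colimit arrow at $i$ is the coproduct injection, and the colimit arrow at $\bot$ is the unique map from $\zero$.
\item Tree-connectedness yields a least $t\in T$ such that $f$ factors through $D(t)\to\colim D$.
\end{itemize}
If $t=i\in I$, then $f$ factors through the coproduct injection $Q_i\to\coprod_j Q_j$, as required.

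The main obstacle is the remaining case $t=\bot$, where $f$ factors as $P\to\zero\to\coprod_i Q_i$. Here I use that $I$ is non-empty: pick any $j\in I$. Then $f$ also factors through $Q_j$, via $P\to\zero\to Q_j\to\coprod_i Q_i$, because $\zero\to Q_j\to\coprod_i Q_i$ equals the unique arrow $\zero\to\coprod_i Q_i$ by initiality. So $f$ factors through a coproduct injection in this case too.

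The point needing care is confirming that $\zero$ is a path with a finite linear $\Emb\zero$. This uses only the proper factorisation system, so I would check it explicitly rather than assume it; the splitting argument above is the check. Cocompleteness is used only to guarantee that this colimit exists.
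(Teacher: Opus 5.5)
Your proof is correct and follows the paper's argument: you use the same tree (a root with $I$ leaves), the same diagram sending the root to $\zero$ and the leaves to the $Q_i$, and the same identification of its colimit with $\coprod_{i\in I} Q_i$. Beyond that, you make explicit three points the paper leaves implicit: the equivalence of (1) and (2) via the $(\Q,\M)$ factorisation, the check that $\zero$ is a path, and the case where the least $t$ is the root, which you handle using that $I$ is non-empty.
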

\begin{prf}
    The equivalence of items~\ref{i:zero-no-proper-quo} and~\ref{i:unique-arrow-emb} is easily verified from basic properties of orthogonal factorisation systems. Now, let $X$ be a tree-connected object in $\C$ and let $\{Q_{i}\mid i\in I\}$ be a non-empty set of paths. Consider a tree $T$ consisting of a root with $I$-many immediate successors, and let $D\colon T\to \C$ be the functor that sends the root of $T$ to $\zero$, and the $i$-th immediate successor of the root to $Q_{i}$. This is a tree-diagram of paths whose colimit coincides with $\coprod_{i\in I}{Q_{i}}$. Since $X$ is tree-connected, any arrow $X\to \coprod_{i\in I}{Q_{i}}$ factors through some colimit arrow $Q_{j}\to \coprod_{i\in I}{Q_{i}}$.
\end{prf}

The following is an immediate consequence of Lemma~\ref{l:old-vs-new-connectedness}.

\begin{proposition}
Let $\C$ be an arboreal category in the new sense (Definition~\ref{def:arboreal-cat-new}). If $\C$ admits coproducts of sets of paths, and its initial object has no proper quotients, then $\C$ is arboreal in the old sense (Definition~\ref{def:arboreal-cat}).
\end{proposition}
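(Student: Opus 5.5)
The plan is to check the three axioms of Definition~\ref{def:arboreal-cat} one at a time. Two of them come for free from the hypothesis that $\C$ is arboreal in the new sense; only connectedness of paths needs an argument.

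First, the ambient data coincide. Both definitions ask for a well-powered category with a stable proper factorisation system, and this is part of the new definition. The 2-out-of-3 property and the path-generation axiom are stated identically in Definition~\ref{def:arboreal-cat} and Definition~\ref{def:arboreal-cat-new}, so they transfer verbatim. For the first half of the connectedness axiom, existence of coproducts of sets of paths, we simply use the extra hypothesis.

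It remains to show that every path $P$ is connected. Fix a non-empty set of paths $\{Q_i\mid i\in I\}$ and an arrow $f\colon P\to \coprod_{i\in I}Q_i$. I would repeat the argument from the last part of Lemma~\ref{l:old-vs-new-connectedness}, but without assuming cocompleteness.

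1. Since the initial object $\zero$ has no proper quotients, the equivalence in Lemma~\ref{l:old-vs-new-connectedness} (which only uses the stable proper factorisation system and the initial object) shows that each unique arrow $\zero\to Q_i$ is an embedding.
2. Note that $\zero$ is a path: $\Emb\zero$ is a singleton, since any embedding $S\emb\zero$ is split by the unique arrow $\zero\to S$ and hence is an isomorphism.
3. Let $T$ be the tree with a root and $I$-many immediate successors. Define $D\colon T\to\C$ by sending the root to $\zero$, the $i$-th successor to $Q_i$, and each non-identity arrow to the unique embedding $\zero\emb Q_i$. This is a tree-diagram of paths.
4. A cocone over $D$ is exactly a family of arrows out of the $Q_i$, because the component at the root is forced by initiality. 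Hence $\coprod_{i\in I}Q_i$, which exists by hypothesis, is a colimit of $D$, with colimit arrows the coproduct injections together with $\zero\to\coprod_i Q_i$.

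Tree-connectedness of $P$ then yields a least $t\in T$ such that $f$ factors through $D(t)\to\colim D$. If $t$ is a successor $i=j$, we are done: $f$ factors through the coproduct injection $Q_j\to\coprod_{i\in I}Q_i$.

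The only step needing care is the case where $t$ is the root, i.e.\ $f$ factors through $\zero$. Here we use that $I$ is non-empty: pick any $j\in I$. Since $\zero\to\coprod_i Q_i$ equals the composite $\zero\to Q_j\to\coprod_i Q_i$, the arrow $f$ again factors through the $j$-th injection. This establishes connectedness of every path, and therefore $\C$ is arboreal in the old sense.
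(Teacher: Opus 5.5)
Your proposal is correct and follows the paper's route. The paper derives the proposition from Lemma~\ref{l:old-vs-new-connectedness}, whose proof uses exactly your tree-diagram with root $\zero$ and $I$-many successors $Q_i$, whose colimit is $\coprod_{i\in I}Q_i$. Your extra checks fill in details the paper leaves implicit: that $\zero$ is a path, that only this one colimit is needed rather than full cocompleteness, and that the root case is handled by the non-emptiness of $I$.
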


Since both in \(\EM(\Ek)\) and \(\EM(\Pk)\) the initial object has no proper quotients, the above proposition provides another, less direct, proof that both categories are arboreal in the old sense. However, the proposition does not apply to \(\EM(\Mk)\). Note that as soon as the modal signature $\sigma$ contains a unary symbol $P$, any structure consisting of one element that satisfies $P$ is a proper quotient of the initial object, which is a one-element structure with no relations.

\section{Posets of $\M$-subobjects revisited}\label{s:posets-subobjects}

The main technical use of the connectedness axiom in~\cite{AR2021icalp,AR2022} was to establish properties of the posets of $\M$-subobjects. We show how to recover these properties under the new definition of an arboreal category. Henceforth, by ``arboreal category'' we mean a category that satisfies the conditions in Definition~\ref{def:arboreal-cat-new}. 

Reasoning as in~\cite[\S3]{AR2022}, we can associate with each object $X$ of an arboreal category the set $\Path{X}$ of (equivalence classes of) path embeddings into $X$. We regard the latter as a poset, with the order induced by that of $\Emb{X}$ (cf.~Section~\ref{ss:fact-system}). Moreover, every arrow $f\colon X\to Y$ induces a monotone map $\Path{f}\colon \Path{X}\to \Path{Y}$ that sends $m\colon P\emb X$ to the path embedding $Q\emb Y$ obtained by taking the $(\Q,\M)$-decomposition of $f\circ m$:
\[\begin{tikzcd}
P \arrow[twoheadrightarrow]{r}{} \arrow[rr, relay arrow=2ex, "f\circ m"] & Q \arrow[rightarrowtail]{r}{} & Y
\end{tikzcd}\] 
The object $Q$ is a path by \cite[Lemma~3.5]{AR2022}.

\begin{theorem}\label{thm:path-cat-functor-into-trees}
Let $\A$ be an arboreal category. The assignment $X\mapsto \Path{X}$ induces a functor $\Path\colon \A\to\T$ into the category $\T$ of trees and forest morphisms.
\end{theorem}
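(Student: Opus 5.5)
We need to show three things: that $\Path X$ is a tree, that $\Path f$ is a forest morphism, and that $\Path$ preserves identities and composition. Throughout we use basic facts about proper factorisation systems together with the fact that $\Emb P$ is a finite chain for each path $P$.

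\emph{Step 1: $\Path X$ is a forest.} Let $m\colon P\emb X$ be a path embedding. Any element $[n]\leq[m]$ of $\Path X$ with $n\colon Q\emb X$ satisfies $n=m\circ h$ for some $h\colon Q\to P$, and $h$ is an embedding. Conversely, every embedding $S\emb P$ has path domain, because $\Emb S$ is isomorphic to the down-set of $[S\emb P]$ in the chain $\Emb P$. Hence composition with $m$ gives an order isomorphism
\[
\down\,[m]\cong \Emb P ,
\]
which is a finite chain. Moreover $\htf([m])=\htf(P)$.

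\emph{Step 2: $\Path X$ has exactly one root.} This is where the new axiom is used, replacing the old connectedness argument. The root should be the class of $\zero\to X$; it is enough to show that every path $P$ has a least $\M$-subobject, and that this subobject is isomorphic to $\zero$ or at least is initial among paths. I would use tree-connectedness with the trivial tree-diagram. Take $T$ to be the one-element tree and $D(t)=P$. Then $\colim D=P$, and tree-connectedness of $\zero$ is not assumed, so instead I would use the bottom $B\emb P$ of the chain $\Emb P$. For two paths $P,Q$ with bottoms $B_P,B_Q$, I would consider the tree-diagram with root $\zero$ (if $\zero\to B$ is an embedding) or else the tree-diagram $B_P\leftarrow \ldots$.

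This is the main obstacle. To show that all minimal path embeddings into $X$ coincide, I would use path-generation: $X$ is the colimit of its path embeddings. Uniqueness of the root should then follow by applying tree-connectedness to a tree-diagram built from the minimal paths glued over $\zero$, together with the ``least $t$'' clause, which forces the factorisation through the root of the diagram. Concretely, I would first prove that the bottom of $\Emb P$ is the image of $\zero\to P$. The image $\zero\epi Z\emb P$ is a path of height $0$, and every height-$0$ path $Z'$ receives $\zero\to Z'$, which is a quotient by the 2-out-of-3 property applied to $\zero\epi Z\to$ comparisons. Once roots are identified with images of $\zero\to X$, which are unique by factorisation, uniqueness of the root is immediate. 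Existence of the root follows since $\Path X$ is nonempty, as it contains the image of $\zero$.

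\emph{Step 3: $\Path f$ is a forest morphism, and functoriality.} Monotonicity follows from the diagonal fill-in. To show that $\Path f$ preserves height, the point is that for $P\epi Q$ a quotient between paths, $\htf P=\htf Q$. I would prove this via the 2-out-of-3 property as in \cite[Lemma~3.5]{AR2022}: the map $\Emb P\to\Emb Q$ induced by image factorisation is a surjective monotone map between finite chains. Injectivity then follows by stability, since pulling back embeddings of $Q$ along the quotient recovers the subobjects of $P$. Finally, identities are preserved trivially, and composition is preserved by uniqueness of factorisations up to isomorphism, since composites of quotients are quotients.
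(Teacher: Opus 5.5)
\textbf{Step 3 has a genuine gap.} The heart of the theorem is that a quotient $e\colon P\epi Q$ between paths satisfies $\htf(P)=\htf(Q)$, i.e.\ that $\exists_e\colon\Emb P\to\Emb Q$ is a bijection. You attribute injectivity to stability (``pulling back embeddings of $Q$ along the quotient recovers the subobjects of $P$''). But stability only yields $\exists_e\circ e^*=\id_{\Emb Q}$, which is \emph{surjectivity} of $\exists_e$; this is also what shows that $Q$ is a path. Stability gives no control over $e^*\circ\exists_e$. Injectivity is exactly where the 2-out-of-3 property enters, and your argument never actually uses it.

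The correct step runs as follows. Suppose $s_1=s_2\circ i$ in $\Emb P$ and $\exists_e s_1=\exists_e s_2=[n]$. Write $e\circ s_2=n\circ e_2$ with $e_2$ a quotient. Since $n$ is monic, $e_2\circ i$ is, up to an isomorphism, the quotient part of $e\circ s_1$, so it is a quotient. All objects involved are paths, so by 2-out-of-3 the embedding $i$ is a quotient, hence an isomorphism. As $\Emb P$ is a chain, $\exists_e$ is injective.

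Stability cannot do this job on its own. Consider the category of finite ordinals $\{0<\dots<n\}$ with monotone maps $f$ such that $f(0)=0$ and $f(i+1)\le f(i)+1$, factored as (surjection, initial-segment inclusion). This is a stable proper factorisation system, $\{0\}$ is initial, and every object is a path. Yet $\{0<1\}\epi\{0\}$ is a quotient of paths that lowers height, so $\Path$ does not land in forest morphisms there. Moreover, $e^*\exists_e$ sends $\{0\}\emb\{0<1\}$ to the top element, contradicting your claim.

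\textbf{Step 2 misidentifies which axioms are needed.} Tree-connectedness and path-generation play no role here. As the paper notes, the argument of \cite[Theorem~3.11]{AR2022} needs only an initial object, a stable proper factorisation system and the 2-out-of-3 property. Your final ``concrete'' argument can be salvaged, but the appeal to 2-out-of-3 there is spurious. For a height-$0$ path $Z'$, the embedding part of $\zero\epi Z\emb Z'$ is an isomorphism simply because $\Emb Z'$ is a singleton.

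The direct route is shorter. Let $\zero\epi Z\emb X$ be the factorisation of $\zero\to X$. For any embedding $S\emb X$, the square formed with $\zero\to S$ commutes by initiality and so has a diagonal filler; hence $[Z\emb X]$ is the least element of $\Emb X$. Also, $Z$ is a path of height $0$: the same fill-in shows that any embedding $S\emb Z$ is split epic, hence an isomorphism.
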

\begin{prf}
For arboreal categories in the old sense, this is~\cite[Theorem~3.11]{AR2022}. Its proof only requires a category with an initial object and a stable proper factorisation system satisfying the 2-out-of-3 condition~\cite[Remark~3.14]{AR2022}.
\end{prf}

\begin{lemma}\label{l:path-cat-suprema}
For any object $X$ of an arboreal category~$\A$, the following hold:
\begin{enumerate}[label=(\alph*)]
\item\label{suprema-of-paths} Any subset $\U\subseteq \Path{X}$ admits a supremum $\bigvee \U$ in $\Emb{X}$.
\item\label{paths-j-irred} For any path embedding $[m]\in\Path{X}$ and non-empty set $\mathcal{S}\subseteq \Emb{X}$, if $[m]= \bigvee\mathcal{S}$ then $[m]\in\mathcal{S}$.
\item\label{at-most-one-emb} If \(m,n\colon P \emb Q\) are embeddings between paths, then \([m] = [n]\) in \(\Emb Q\).
\end{enumerate}
\end{lemma}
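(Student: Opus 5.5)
The plan is to exploit three facts: $\Path X$ is a tree (Theorem~\ref{thm:path-cat-functor-into-trees}), its down-closed subsets index tree-diagrams of paths, and the tree-connectedness axiom lets us pull arrows out of the colimits of such diagrams. I treat \ref{at-most-one-emb} first, since it is needed to make these tree-diagrams well defined.

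\emph{Part \ref{at-most-one-emb}.} Since $\Emb Q$ is a finite chain, $[m]$ and $[n]$ are comparable. For any embedding $m\colon P\emb Q$, composition with $m$ gives an order-isomorphism between $\Emb P$ and the down-set ${\down}[m]$ of $\Emb Q$. Hence ${\down}[m]$ and ${\down}[n]$ are chains of the same finite length $\htf(P)+1$. In a finite chain, two elements with down-sets of the same cardinality coincide, so $[m]=[n]$.

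\emph{Part \ref{suprema-of-paths}.} Let $T$ be the down-closure in $\Path X$ of $\U\cup\{\bot\}$, where $\bot$ is the root of $\Path X$. Then $T$ is a subtree.
\begin{itemize}
\item Choose a representative $m_t\colon P_t\emb X$ for each $t\in T$.
\item For $s\le t$ there is a unique $h_{st}\colon P_s\to P_t$ with $m_s=m_t\circ h_{st}$, because $m_t$ is mono; $h_{st}$ is an embedding.
\item By uniqueness these maps compose functorially, giving a tree-diagram of paths $D$.
\end{itemize}
The arrows $m_t$ form a cocone over $D$, inducing $c\colon\colim D\to X$. Factor $c=n\circ e$ with $e$ a quotient and $n$ an embedding; I claim $[n]=\bigvee\U$.
\begin{itemize}
\item Each $[m_t]\le[n]$, since $m_t=n\circ e\circ i_t$.
\item If $k\colon S\emb X$ lies above every element of $\U$, it lies above every $m_t$. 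The factorisations $m_t=k\circ g_t$ form a cocone over $D$ (again because $k$ is mono), which induces $g\colon\colim D\to S$ with $k\circ g=c$.
\item The diagonal filler for the square formed by $e$, $k$, $g$ and $n$ then yields $[n]\le[k]$.
\end{itemize}

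\emph{Part \ref{paths-j-irred}.} This is the main obstacle, because tree-connectedness controls arrows \emph{into} a colimit, whereas here we are handed a quotient \emph{out of} one.
\begin{itemize}
\item Path-generation implies each $s\in\mathcal S$ is the supremum of the path embeddings below it. So $[m]=\bigvee\U$, where $\U$ is the set of path embeddings lying below some element of $\mathcal S$.
\item By the construction in \ref{suprema-of-paths}, we obtain $\colim D\overset{e}{\epi}P\overset{m}{\emb}X$, after identifying $n$ with $m$ up to isomorphism.
\item Apply path-generation to $C=\colim D$: the images of $e\circ p$, for path embeddings $p\colon Q\emb C$, are elements of the finite chain $\Emb P$.
\item Their supremum must be the top element. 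Otherwise $e$ factors through a proper embedding into $P$, using the colimit property of the path embeddings into $C$. Since $e$ is a quotient, the diagonal filler would force that embedding to be an isomorphism.
\item A nonempty finite chain attains its supremum, so some $p$ has $e\circ p$ a quotient.
\item By tree-connectedness, $p$ factors through some $i_t\colon P_t\to C$. Then $m\circ(e\circ p)$ factors through $m_t$, and since $e\circ p$ is a quotient this gives $[m]\le[m_t]$.
\item By construction $[m_t]\le s\le[m]$ for some $s\in\mathcal S$. Hence $[m]=s\in\mathcal S$.
\end{itemize}
Nonemptiness of $\mathcal S$ is used to ensure $\U$, and hence the relevant chain of images, is nonempty.
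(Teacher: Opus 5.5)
Your (a) and (c) are essentially the paper's arguments. In (a) you take the colimit of the tree-diagram over the down-closure of $\U$, factor the mediating map, and fill a diagonal. In (c) you compare heights in the finite chain $\Emb Q$ via $\down[m]\cong\Emb P\cong\down[n]$.

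For (b), which the paper takes over unchanged from \cite[Lemma~3.15]{AR2022}, you take a genuinely different and much heavier route. No colimits are needed. Every $s\in\mathcal S$ satisfies $s\le\bigvee\mathcal S=[m]$, so $\mathcal S\subseteq\down[m]\cong\Emb P$, which is a finite chain. A non-empty subset of a finite chain has a maximum, and a maximum is the supremum, so $\max\mathcal S=[m]$ and $[m]\in\mathcal S$. This uses only that $P$ is a path and that $\M$ is left-cancellative, so (b) is the easiest item, not the main obstacle.

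Your argument for (b) is nonetheless correct. It rewrites $\bigvee\mathcal S$ as $\bigvee\U$ via path-generation, then uses path-generation of $\colim D$ to find $p$ with $e\circ p$ a quotient, then tree-connectedness to get $[m]\le[m_t]$. In effect you reprove the key step of Proposition~\ref{pr:perfect-lattice-of-strong-subs} for the special case of a path equal to $\bigvee\U$. This gives the stronger conclusion that $[m]$ lies below a single element of $\U$, but it spends path-generation and tree-connectedness on a fact that needs neither.

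Minor points:
\begin{itemize}
\item (c) is not actually needed to make the diagram in (a) well defined; uniqueness of $h_{st}$ comes from $m_t$ being monic, as you note yourself.
\item In (a) with $\U=\emptyset$, the inequality $[m_\bot]\le[k]$ does not follow from $k$ being an upper bound of $\U$. It needs a separate easy argument, e.g.\ $[m_\bot]\le[k\circ r]\le[k]$ for any path embedding $r$ into the domain of $k$.
\item In (b), non-emptiness of $\mathcal S$ is really used to guarantee that $t$ lies below some element of $\U$. The chain of images is non-empty regardless, since $\Path C$ has a root.
\end{itemize}
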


\begin{prf}
For arboreal categories in the old sense, this is~\cite[Lemma~3.15]{AR2022}.
Connectedness of paths plays no role in the original proof, but we need to adapt it to use colimits of tree-diagrams instead of coproducts of paths.

For item~\ref{suprema-of-paths}, consider a set of (equivalence classes of) path embeddings \mbox{\(\U \subseteq \Path{X}\)}.
The downward closure $I \coloneqq \down \, \U$ of $\U$ in $\Path{X}$ is a tree, and therefore we can define a tree-diagram of paths
\[
    D\colon I \to \A
\]
by sending an equivalence class \([m]\) to the domain \(P\) of a chosen representative \(m\colon P \emb X\) and \([m] \leq [m']\) to the unique \(e\colon P \emb P'\) such that \(m = m' \circ e\).
There is a compatible cocone on $D$ with vertex~$X$ whose component $D([m]) \to X$ is given by the chosen representative.
Let $S\coloneqq \colim{D}$ and consider the $(\Q,\M)$ factorisation of the unique mediating morphism $\delta\colon S\to X$:
\[\begin{tikzcd}
S \arrow[twoheadrightarrow]{r}{e} \arrow[rr, relay arrow=2ex, "\delta"] & T \arrow[rightarrowtail]{r}{n} & X
\end{tikzcd}\] 
Each representative \(m\) of $[m]\in \U$ factors through $n$, thus $[n]$ is an upper bound for~$\U$. We claim that $[n]$ is the least upper bound, i.e., $[n]=\bigvee{\U}$ in $\Emb{X}$. Suppose that all path embeddings in $\U$ factor through an embedding $n'\colon T'\emb X$. Then the same holds for all path embeddings in $\down \, \U$. By the universal property of $S$, we get a morphism $\phi\colon S\to T'$. By the uniqueness of $\delta$, we obtain $n'\circ \phi=\delta$, and so the following square commutes.
\[\begin{tikzcd}
S \arrow{d}[swap]{\phi} \arrow[twoheadrightarrow]{r}{e} & T \arrow[rightarrowtail]{d}{n} \\
T' \arrow[rightarrowtail]{r}{n'} & X
\end{tikzcd}\]
Therefore, there exists a diagonal filler $T\to T'$. In particular, the commutativity of the lower triangle entails that $n\leq n'$, as was to be proved.

The proofs of items~\ref{paths-j-irred} and~\ref{at-most-one-emb} are the same as in~\cite[Lemma~3.15]{AR2022}.
\end{prf}

For the next lemma, recall from Definition~\ref{def:paths} that $\htf(P)$ denotes the height of the path $P$.

\begin{lemma}\label{l:arboreal-consequences}
Let $\A$ be an arboreal category. The following statements hold:
\begin{enumerate}[label=(\alph*)]
\item\label{canonical-suprema} For any object $X$ of $\A$ and any $m\in \Emb{X}$, $m=\bigvee{\{p\in\Path{X}\mid p\leq m\}}$. 
\item\label{quot-to-surj} A morphism $f$ is a quotient if, and only if, $\Path{f}$ is surjective.
\item\label{i:height-quotient} A morphism $P\to Q$ between paths is a quotient if, and only if, \({\htf(P)=\htf(Q)}\).
\end{enumerate} 
\end{lemma}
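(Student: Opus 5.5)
Following the pattern of the preceding proofs, I expect all three items to hold verbatim as in \cite[Lemma~3.16 and Proposition~3.17]{AR2022}. The plan is to check where the old connectedness axiom entered those arguments and to substitute Lemma~\ref{l:path-cat-suprema}, whose proof was already adapted to colimits of tree-diagrams of paths.

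\textbf{Item~\ref{canonical-suprema}.} Let $m\colon S\emb X$ and put $\U\coloneqq\{p\in\Path{X}\mid p\leq m\}$. By Lemma~\ref{l:path-cat-suprema}\ref{suprema-of-paths} the supremum $\bigvee\U$ exists in $\Emb{X}$, and clearly $\bigvee\U\leq m$. For the converse I use path-generation. The cocone of all path embeddings $P\emb S$ is a colimit cocone. Composing with $m$ identifies these path embeddings (up to equivalence) with the elements of $\U$, and the colimit coincides with the colimit of the tree-diagram $D$ on $\down\,\U$ used in the proof of Lemma~\ref{l:path-cat-suprema}; the two indexing categories are cofinal in one another. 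Hence the mediating morphism $\delta\colon\colim D\to X$ is, up to isomorphism, $m$ itself. It is therefore already an embedding, so its $(\Q,\M)$-factorisation has an isomorphism as quotient part, and $\bigvee\U=[m]$. The one point to verify carefully here is that the colimit over the preorder of path embeddings into $S$ agrees with the colimit over the tree $\down\,\U$.

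\textbf{Item~\ref{quot-to-surj}.} Suppose first that $f\colon X\to Y$ is a quotient, and let $p\colon P\emb Y$ be a path embedding. By stability, the pullback of $f$ along $p$ is a quotient $e\colon R\epi P$ with $R\emb X$. By item~\ref{canonical-suprema}, $[R\emb X]$ is the supremum of the paths below it. Pushing forward along $f$ gives $[p]=\bigvee\{\Path{f}(q)\mid q\leq R\}$, using that the image of a supremum of embeddings is the supremum of the images, which is a standard diagonal-filler argument. Lemma~\ref{l:path-cat-suprema}\ref{paths-j-irred} then yields $[p]=\Path{f}(q)$ for some $q$. If the set of such $q$ is empty, $P$ is a quotient of the initial object; this edge case is handled as in \cite{AR2022} via the root of $\Path{X}$.

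Conversely, suppose $\Path{f}$ is surjective and factor $f=n\circ e$ with $e\in\Q$ and $n\in\M$. Every path embedding into $Y$ then factors through $n$, so item~\ref{canonical-suprema} applied to $\id_Y$ gives $[n]=[\id_Y]$. Hence $n$ is an isomorphism and $f$ is a quotient.

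\textbf{Item~\ref{i:height-quotient}.} For a morphism $f\colon P\to Q$ between paths, $\Path{f}$ is a forest morphism between the finite chains $\Path{P}\cong\Emb{P}$ and $\Path{Q}$, by Theorem~\ref{thm:path-cat-functor-into-trees}. Such a map preserves height, so it is surjective exactly when $\htf(P)=\htf(Q)$. Item~\ref{quot-to-surj} then gives the claim.

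I expect the main obstacle to be the forward direction of item~\ref{quot-to-surj}, since that is where the original proof used coproducts and connectedness. The first thing I would try is the stability and join-irreducibility route described above.
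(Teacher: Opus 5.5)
Your proposal is correct and follows the same route as the paper, which simply notes that the original arguments carry over once Lemma~\ref{l:path-cat-suprema} provides suprema of path embeddings; your reconstruction via path-generation, stability, join-irreducibility and height-preservation of $\Path f$ is exactly such an argument. Three minor points: in (a) the explicit colimit description of $\bigvee\U$ is not needed (writing $\bigvee\U=[m\circ k]$ with $k\colon T\emb S$, every path embedding into $S$ factors through $k$, so path-generation of $S$ makes $k$ split epi, hence iso); in (b) the join-preservation of $\exists_f$ is best justified by $\exists_f\dashv f^{*}$, which is available since $f$ is a quotient and the factorisation system is stable; and the index set in (b) is never empty, because $\zero$ is a path and the image of $\zero\to R$ gives a path embedding below $R\emb X$.
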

\begin{prf}
For arboreal categories in the old sense, this is~\cite[Lemma~5.5]{AR2022} combined with \cite[Lemma~6.11]{ALR2025}. The original proofs rely on the existence of suprema of path embeddings into a given object, but not on their explicit construction via coproducts. Thus, the same proofs apply.
\end{prf}

\begin{proposition}\label{pr:perfect-lattice-of-strong-subs}
Let $\A$ be an arboreal category, $X$ an object of $\A$, and $\U\subseteq \Path{X}$ a non-empty subset. A path embedding $m\in\Path{X}$ is below $\bigvee{\U}$ if, and only if, it is below some element of~$\U$.
\end{proposition}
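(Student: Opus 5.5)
The backward implication is immediate: if $m\leq u$ for some $u\in\U$, then $m\leq\bigvee\U$ because $\bigvee\U$ is an upper bound of $\U$ in $\Emb{X}$ (its existence is Lemma~\ref{l:path-cat-suprema}\ref{suprema-of-paths}). All the work is in the forward implication, and for it I will reuse the construction of $\bigvee\U$ from the proof of Lemma~\ref{l:path-cat-suprema}. Set $I\coloneqq\down\,\U\subseteq\Path{X}$, a tree, let $D\colon I\to\A$ be the associated tree-diagram of paths, and let $S\coloneqq\colim D$. Factor the mediating arrow $\delta\colon S\to X$ as $S\xepi{e} T\xemb{n} X$; we know $[n]=\bigvee\U$. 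The nonemptiness of $\U$ guarantees that $I$ is a nonempty tree, so the tree-connectedness axiom applies to $D$.

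Now let $m\colon P\emb X$ be a path embedding with $m\leq n$, say $m=n\circ g$ for some $g\colon P\to T$. The plan is to lift $g$ along the quotient $e$ to an arrow $h\colon P\to S$ with $e\circ h=g$. Once this is done, tree-connectedness of $P$ applied to $h\colon P\to\colim D$ gives some $[m']\in I$, with representative $m'\colon P'\emb X$, such that $h$ factors through the colimit arrow $D([m'])=P'\to S$. Composing with $\delta$, whose restriction along that colimit arrow is $m'$, shows that $m=\delta\circ h$ factors through $m'$. Hence $m\leq m'$, and since $m'\in\down\,\U$ there is $u\in\U$ with $m'\leq u$, so $m\leq u$.

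The main obstacle is constructing the lift $h$. For this I will use stability of the factorisation system: pull back the quotient $e$ along the embedding $g'\colon Q\emb T$ coming from the $(\Q,\M)$-factorisation of $g$ (write $g=g'\circ q$, with $Q$ a path by \cite[Lemma~3.5]{AR2022}). This yields a quotient $e'\colon R\epi Q$ and an embedding $R\emb S$. Lifting $Q$ itself through $e'$ looks hard in general, so I will go through paths instead. By Lemma~\ref{l:arboreal-consequences}\ref{quot-to-surj}, $\Path{e'}$ is surjective, so the top element $[\id_Q]$ of $\Path{Q}$ is the image of some path embedding $p\colon P''\emb R$. That means $e'\circ p=j\circ r$ for a quotient $r\colon P''\epi Q'$ and an embedding $j\colon Q'\emb Q$ with $[j]=[\id_Q]$, so $j$ is an isomorphism and $e'\circ p$ is a quotient onto $Q$.

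This gives a path $P''$ with an arrow $P''\to S$ (namely $p$ followed by $R\emb S$) whose image under $e$ is $g'\circ(e'\circ p)$. I do not need $h$ to have domain exactly $P$. It suffices to run the tree-connectedness argument of the second paragraph on $P''\to S$: this shows that $g'\circ(e'\circ p)$ composed with $n$ factors through some $m'\in I$, i.e.\ $n\circ g'\circ(e'\circ p)=m'\circ k$ for some $k$. Since $e'\circ p$ is a quotient and $m'$ is an embedding, the diagonal filler gives $n\circ g'\leq m'$. Finally, because $m=n\circ g'\circ q$, we get $m\leq n\circ g'\leq m'\leq u$ for some $u\in\U$, which completes the forward implication.
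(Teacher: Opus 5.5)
Your proof is correct and is essentially the paper's argument. You pull back the quotient $e$ along an embedding into the supremum's domain, use surjectivity of $\Path$ on quotients to get a path mapping onto $Q$ by a quotient, and apply tree-connectedness of that path to its composite into $\colim D$. You then conclude by orthogonality, where the paper instead notes that $r\circ k$ is a quotient whose composite with $m$ is an embedding, hence an isomorphism. The factorisation $g=g'\circ q$ is harmless but unnecessary, since $n\circ g=m\in\M$ already forces $g\in\M$.
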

\begin{prf}
For arboreal categories in the old sense, this is~\cite[Proposition~5.6]{AR2022}.
We adapt the proof of the latter result.
Fix an object $X$ of~$\A$ and a non-empty set of path embeddings $\U=\{m_i\colon P_i\emb X\mid i\in I\}$. Let $m\colon P\emb X$ be an arbitrary path embedding. If $m$ is below some element of $\U$, then clearly $m\leq \bigvee\U$. 

For the converse direction, suppose $m\leq \bigvee\U$. Recall from the proof of Lemma~\ref{l:path-cat-suprema}\ref{suprema-of-paths} that the supremum of $\U$ is obtained by taking the $(\Q,\M)$ factorisation $\colim{D}\xepi{e} S\xemb{n} X$ of the mediating morphism $\colim{D}\to X$, where $D\colon T\to \A$ is the tree-diagram of paths associated with $\down \, \U$. With this notation, $\bigvee \U=n$. Since $m\leq \bigvee\U$, there exists an embedding $m'\colon P\emb S$ such that $m=n\circ m'$. Consider the pullback of $m'$ along $e$:
\[\begin{tikzcd}
V \arrow[rightarrowtail]{d}[swap]{j} \arrow[twoheadrightarrow]{r}{r} \arrow[dr, phantom, "\lrcorner", very near start] & P \arrow[rightarrowtail]{d}{m'} \\
 \colim{D} \arrow[twoheadrightarrow]{r}{e} & S
\end{tikzcd}\]
Applying Lemma~\ref{l:arboreal-consequences}\ref{quot-to-surj} to the quotient $r$, we see that there exists a path embedding $k\colon Q\emb V$ such that $\Path{r}(k)=\id_P$, i.e.~$r\circ k$ is a quotient. Because $Q$ is tree-connected, there is a least $t\in T$ such that $j\circ k\colon Q\emb \colim{D}$ factors through the colimit arrow $i_{t}\colon D(t)\to \colim{D}$, i.e., $j\circ k= i_{t}\circ p$ for some embedding $p\colon Q\emb D(t)$. Let $i\in I$ be such that the path embedding into $X$ corresponding to $D(t)$ is of the form $m_{i}\circ \lambda$ for some embedding $\lambda\colon D(t)\emb P_{i}$. We then have a commutative diagram as follows.
\[\begin{tikzcd}[column sep=3em, row sep=2.5em]
Q \arrow[twoheadrightarrow]{r}{r\circ k} \arrow[rightarrowtail]{d}[swap]{p} & P \arrow[rightarrowtail]{d}[swap]{m'} \arrow[rightarrowtail]{dr}{m} & {} \\
D(t) \arrow{r}{e\circ i_{t}} \arrow[rr, relay arrow=-2ex, "m_i\circ \lambda", swap, rightarrowtail] & S \arrow[rightarrowtail]{r}{n} & X
\end{tikzcd}\]
As $m\circ r\circ k=m_i\circ \lambda \circ p$ and the right-hand side of the equation is an embedding, $r\circ k$ is an isomorphism. So $m\leq m_{i}\circ \lambda \leq m_i\in \U$, thus concluding the proof.
\end{prf}

\begin{remark}
Combining Lemmas~\ref{l:path-cat-suprema}\ref{suprema-of-paths} and~\ref{l:arboreal-consequences}\ref{canonical-suprema} implies that, for any object $X$ in an arboreal category, its poset of embeddings $\Emb{X}$ admits all suprema and is therefore a complete lattice. 
\end{remark}

\begin{remark}
We could weaken the tree-connectedness condition in Definition~\ref{def:arboreal-cat-new} by considering only tree-diagrams of paths $D\colon T\to \A$ that admit a compatible cocone of embeddings in $\A$. This weaker axiom would still enable us to prove all the results in this section, since we only need to consider suprema of path embeddings into a fixed object. Similarly, it would suffice to assume that $\A$ has colimits of tree-diagrams of paths admitting a compatible cocone of embeddings. However, our main examples of arboreal categories are cocomplete.
\end{remark}

\section{The path functor is a Street fibration}
\label{s:path-func}

In this final section, we demonstrate an important consequence of the new definition of an arboreal category.
Let us fix an arboreal category $\A$ and write
\[
    \Path\colon \A\to\T
\]
for the associated \emph{path functor} into the category $\T$ of trees and forest morphisms between them (cf.\ Theorem~\ref{thm:path-cat-functor-into-trees}). Recall that $\Path$ assigns to an object $X$ of $\A$ the tree $\Path{X}$ of (equivalence classes of) path embeddings into $X$; the order on \(\Path X\) is the one induced by the poset \(\Emb X\) of $\M$-subobjects of $X$.

We wish to show that \(\Path\) is a Street fibration (Theorem~\ref{t:P-fibration} below).
To this end, we review some basic definitions related to fibrations.

\begin{dfn}
A morphism $f\colon x\to y$ in $\A$ is \emph{Cartesian (with respect to $\Path$)} if for all $g\colon z\to y$ in $\A$ and all $w\colon \Path z\to \Path x$ in $\T$ such that $\Path f\circ w = \Path g$, there exists a unique $\o{w}\colon z\to x$ in $\A$ such that $g = f\circ \o{w}$ and $\Path \o{w}=w$. 

\vspace{1em}
\begin{center}
\begin{tikzcd}[column sep = 4em]
z \arrow[bend left=30]{drr}[description]{g} \arrow[dashed]{dr}[description]{\exists! \, \o{w}} & & \\
 & x \arrow{r}[description]{f} & y
\end{tikzcd}

\vspace{0.5em}
\begin{tikzcd}[column sep = 3em]
\Path z \arrow[bend left=30]{drr}[description]{\Path g} \arrow{dr}[description]{w}& & \\
 & \Path x \arrow{r}[description]{\Path f} & \Path y
\end{tikzcd}
\end{center}

Furthermore, $\Path \colon \A\to\T$ is a \emph{(Street) fibration} if, for all morphisms in $\T$ of the form $h\colon T\to \Path y$, there exists a Cartesian morphism $f\colon x\to y$ in $\A$ and an isomorphism $\iota\colon \Path x\cong T$ in $\T$ such that $\Path f = h\circ \iota$.
\end{dfn}

The first step in showing that \(\Path\) is a fibration is a technical result that characterises Cartesian morphisms as the so-called ``pathwise embeddings'', which is an important class of morphisms in the theory of arboreal categories.

\begin{theorem}
\label{t:Cartesian-iff-pw}
    A morphism \(f\colon X\to Y\) in \(\A\) is Cartesian if, and only if, it is a \emph{pathwise embedding}, i.e.\ \(f\circ m\) is an embedding for every path embedding \(m\colon P\emb X\).
\end{theorem}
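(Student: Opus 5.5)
For the direction ``pathwise embedding $\Rightarrow$ Cartesian'', let $f\colon X\to Y$ be a pathwise embedding and take $g\colon Z\to Y$ and a forest morphism $w\colon \Path Z\to\Path X$ with $\Path f\circ w=\Path g$. The plan is to build $\o{w}$ path by path and glue using path-generation.

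For each path embedding $m\colon P\emb Z$, choose a representative $m_w\colon P_w\emb X$ of $w([m])$. Since $f$ is a pathwise embedding, $f\circ m_w$ is an embedding representing $\Path f(w[m])=\Path g([m])$. The latter is the embedding part of the factorisation $P\epi Q\emb Y$ of $g\circ m$. Hence we get a quotient $q_m\colon P\epi P_w$ with $f\circ m_w\circ q_m=g\circ m$.

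These maps must be compatible. If $m\le m'$ via $e\colon P\emb P'$, monotonicity of $w$ gives $w[m]\le w[m']$, so there is an embedding $e_w\colon P_w\emb P'_w$ with $m'_w\circ e_w=m_w$. We need $m'_w\circ q_{m'}\circ e=m_w\circ q_m$. Both sides are maps $P\to X$ whose composite with $f$ equals $g\circ m$. Factor $q_{m'}\circ e$ as a quotient followed by an embedding into $P'_w$. Using that $w$ preserves heights and Lemma~\ref{l:arboreal-consequences}\ref{i:height-quotient}, the image is the element of $\Emb P'_w$ of height $\htf(P_w)$. By Lemma~\ref{l:path-cat-suprema}\ref{at-most-one-emb} and linearity of $\Emb P'_w$, this image coincides with $[e_w]$. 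Then the two maps agree up to an automorphism of $P_w$, and that automorphism is forced to be the identity because $f\circ m_w$ is mono and both composites equal $g\circ m$.

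By path-generation (density, in the colimit-cocone form), $Z$ is the colimit of its path embeddings, so the compatible family $m_w\circ q_m$ induces a unique $\o{w}\colon Z\to X$. We then check $f\circ\o{w}=g$ on the colimit cocone. The equality $\Path\o{w}=w$ holds because $\o{w}\circ m$ factors as $q_m$ followed by $m_w$. Uniqueness holds since any $\o{w}'$ with $\Path\o{w}'=w$ and $f\circ\o{w}'=g$ must satisfy $\o{w}'\circ m=m_w\circ q'$ for some quotient $q'$, and $f\circ m_w$ being mono forces $q'=q_m$.

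For the converse, suppose $f$ is Cartesian and let $m\colon P\emb X$ be a path embedding. Factor $f\circ m=n\circ e$ with $e\colon P\epi Q$ and $n\colon Q\emb Y$. Take $z:=Q$, $g:=n$, and define $w\colon\Path Q\to\Path X$ by sending the unique element of height $i$ in the chain $\Path Q$ to the restriction of $m$ at height $i$. This is well defined and a forest morphism because $\htf(P)=\htf(Q)$ by Lemma~\ref{l:arboreal-consequences}\ref{i:height-quotient}, and $\Path f\circ w=\Path n$ holds by heights. Cartesianness yields $\o{w}\colon Q\to X$ with $f\circ\o{w}=n$, so $\o{w}$ is an embedding and $\Path\o{w}([\id_Q])=[m]$. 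Therefore $\o{w}\cong m$ as subobjects, i.e.\ $m=\o{w}\circ\phi$ with $\phi\colon P\to Q$ an isomorphism. Hence $f\circ m=n\circ\phi$ is an embedding.

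The main obstacle I expect is the compatibility step in the first direction: showing that the chosen quotients $q_m$ commute with the restriction maps. This is where the linearity of $\Emb$ of a path and the uniqueness of embeddings between paths (Lemma~\ref{l:path-cat-suprema}\ref{at-most-one-emb}) must be used carefully.
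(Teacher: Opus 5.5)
Your proof is correct. For ``pathwise embedding $\Rightarrow$ Cartesian'' you follow essentially the paper's construction in Proposition~\ref{p:pw-implies-Cartesian}:
\begin{itemize}
\item lift each $g\circ m$ through $f$ using a representative of $w([m])$;
\item glue the lifts by path-generation;
\item get uniqueness from monicity of $f\circ m_w$.
\end{itemize}
Your compatibility check via heights and linearity of $\Emb P'_w$ is valid; the phrase ``agree up to an automorphism'' is loose, but your final monicity argument closes it. Both your check and the paper's diagonal-filler argument can be shortened, however. Since $f\circ m'_w$ is mono and $f\circ m'_w\circ(q_{m'}\circ e)=g\circ m=f\circ m'_w\circ(e_w\circ q_m)$, you get $q_{m'}\circ e=e_w\circ q_m$ at once.

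The converse is where you genuinely diverge from the paper. The paper first proves Lemma~\ref{l:cartesian-path-emb}: a Cartesian arrow out of a path is an embedding. That proof uses both existence and uniqueness of lifts to give the quotient part a retraction. The paper then argues in three steps: $m$ is Cartesian by the first direction; Cartesian arrows compose; so $f\circ m$ is a Cartesian arrow out of a path, hence an embedding.

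You instead apply the Cartesian property of $f$ once, to the embedding $n$ together with a forest morphism $w\colon \Path Q\to\Path X$ chosen so that $w([\id_Q])=[m]$. The lift $\o{w}$ is then forced to be an embedding, because $f\circ\o{w}=n$, and it represents $[m]$, whence $f\circ m=n\circ\phi$ for an isomorphism $\phi$.

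Your route is more self-contained. It uses only the existence half of the Cartesian property, so it shows that even weakly Cartesian arrows are pathwise embeddings. It also needs neither the first direction nor closure of Cartesian arrows under composition. The paper's route, in exchange, isolates Lemma~\ref{l:cartesian-path-emb} as a standalone fact and stays within standard fibration-theoretic reasoning.
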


We prove Theorem~\ref{t:Cartesian-iff-pw} in several steps as follows.

\begin{proposition}\label{p:pw-implies-Cartesian}
Every pathwise embedding is Cartesian.
\end{proposition}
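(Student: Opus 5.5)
The plan is to build $\o{w}$ path by path, using path-generation, and to get uniqueness from the fact that $f$ is monic on each path embedding into $X$. Let $f\colon X\to Y$ be a pathwise embedding. Fix $g\colon Z\to Y$ and a forest morphism $w\colon\Path Z\to\Path X$ with $\Path f\circ w=\Path g$.

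\emph{Key observation.} Suppose $q\colon Q\emb X$ is a path embedding and $a,b\colon P\to X$ both factor through $q$, say $a=q\circ a'$ and $b=q\circ b'$. If $f\circ a=f\circ b$, then $a=b$. Indeed, $f\circ q$ is an embedding, hence a monomorphism, so $a'=b'$. The same argument shows that $\Path f$ computes images without a quotient step: $\Path f([q])=[f\circ q]$.

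\emph{Construction of the components.} For each path embedding $p\colon P\emb Z$, choose a representative $q_p\colon Q_p\emb X$ of $w([p])$. Take the $(\Q,\M)$ factorisation $g\circ p=n_p\circ e_p$ with $e_p\colon P\epi Q'_p$. The hypothesis gives
\[
[f\circ q_p]=\Path f(w[p])=\Path g[p]=[n_p].
\]
Hence there is an isomorphism $\phi_p\colon Q'_p\to Q_p$ with $f\circ q_p\circ\phi_p=n_p$. Define
\[
c_p\coloneqq q_p\circ\phi_p\circ e_p\colon P\to X .
\]
Then $f\circ c_p=n_p\circ e_p=g\circ p$, and $c_p$ factors through $q_p$.

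\emph{Compatibility.} Let $p'=p\circ\iota$ for an embedding $\iota\colon P'\emb P$, so $[p']\le[p]$. Since $w$ is monotone, $w[p']\le w[p]$, so both $c_{p'}$ and $c_p\circ\iota$ factor through $q_p$. Moreover,
\[
f\circ c_{p'}=g\circ p'=g\circ p\circ\iota=f\circ c_p\circ\iota .
\]
The key observation then gives $c_{p'}=c_p\circ\iota$. The same argument shows that $c_p$ does not depend on the chosen representative of $[p]$ (up to the connecting isomorphism). So the $c_p$ form a cocone over the diagram of path embeddings into $Z$. By path-generation (the footnote to Definition~\ref{def:arboreal-cat}), this diagram has colimit $Z$, so there is a unique $\o{w}\colon Z\to X$ with $\o{w}\circ p=c_p$ for all $p$.

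\emph{Verification.} The maps $f\circ\o{w}$ and $g$ agree on every colimit leg, so $f\circ\o{w}=g$. For $\Path\o{w}=w$: the $(\Q,\M)$ factorisation of $\o{w}\circ p=q_p\circ(\phi_p\circ e_p)$ consists of the quotient $\phi_p\circ e_p$ followed by $q_p$. Hence $\Path\o{w}[p]=[q_p]=w[p]$.

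\emph{Uniqueness.} Let $v\colon Z\to X$ satisfy $f\circ v=g$ and $\Path v=w$. Then for each $p$, the image of $v\circ p$ is $w[p]$, so $v\circ p$ factors through $q_p$, as does $c_p$. Both have the same composite $g\circ p$ with $f$, so $v\circ p=c_p$ by the key observation. Since the colimit legs are jointly epic, $v=\o{w}$.

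The main obstacle is the bookkeeping of equivalence classes versus representatives: choosing the $q_p$ and the isomorphisms $\phi_p$ consistently, and checking that the cocone is well defined on the diagram of all path embeddings. The key observation is what absorbs these choices, since any two candidate maps through a common path embedding coincide once their composites with $f$ agree.
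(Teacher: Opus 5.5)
Your proof is correct and follows the paper's route. You build the cocone $c_p=q_p\circ\phi_p\circ e_p$ over the path embeddings into $Z$ (the paper's $\phi_n=j_n\circ e_n$), glue it via path-generation, and get uniqueness from monicity of $f\circ q_p$. The only difference is the compatibility step: the paper first takes a diagonal filler and then invokes monotonicity of $w$ and monicity, while your ``key observation'' gets there directly, which is slightly more economical.
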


\begin{prf}
Let $f\colon X\to Y$ be a pathwise embedding and suppose that we are given a commutative diagram as on the right-hand side below.
\begin{center}
\begin{tikzcd}[column sep = 4em]
Z \arrow[bend left=30]{drr}[description]{g} \arrow[dashed]{dr}[description]{\o{w}} & & \\
 & X \arrow{r}[description]{f} & Y
\end{tikzcd}
\ \ \ 
\begin{tikzcd}[column sep = 3em]
\Path Z \arrow[bend left=30]{drr}[description]{\Path g} \arrow{dr}[description]{w}& & \\
 & \Path X \arrow{r}[description]{\Path f} & \Path Y
\end{tikzcd}
\end{center}
We must show that there is a unique $\o{w}\colon Z\to X$ such that $\Path \o{w} = w$ and $f\circ \o{w} = g$. 

We shall define $\o{w}$ as the unique mediating morphism induced by a compatible cocone 
\[
\{\phi_{n} \colon Q \to X \mid n\colon Q\emb Z \}
\]
over the diagram of path embeddings into $Z$. Fix an arbitrary path embedding ${n\colon Q\emb Z}$ and consider the $(\Q,\M)$ factorisation of $g\circ n$:
\[\begin{tikzcd}
Q \arrow[twoheadrightarrow]{r}{e_{n}} & Q' \arrow[rightarrowtail]{r}{m_{n}} & Y.
\end{tikzcd}\]
Then $[m_{n}] = \Path g (n) = \Path f (w([n]))$ entails the existence of an embedding
\[
    {j_{n}\colon Q'\emb X}
\]
such that $[j_{n}] = w([n])$ and the following diagram commutes (just recall that $f$ is a pathwise embedding). 
\[\begin{tikzcd}[row sep = 2.5em]
Q \arrow[rightarrowtail]{r}{n} \arrow[twoheadrightarrow]{d}[swap]{e_{n}} & Z \arrow{r}{g} & Y \\
Q' \arrow[rightarrowtail]{rr}{j_{n}} \arrow[rightarrowtail]{urr}[description]{m_{n}} & & X \arrow{u}[swap]{f}
\end{tikzcd}\]
Let $\phi_{n} \coloneqq j_{n} \circ e_{n} \colon Q\to X$. 

To see that the morphisms $\phi_{n}$ induce a compatible cocone over the diagram of path embeddings into $Z$, suppose that $\ell \colon R \emb Z$ is another path embedding and there exists $i\colon Q \emb R$ such that $\ell\circ i = n$. We must prove that $\phi_{n} = \phi_{\ell} \circ i$. Consider the following commutative square, and note that the bottom horizontal arrow is an embedding because $f$ is a pathwise embedding, hence there exists a diagonal filler $d$.
\[\begin{tikzcd}[column sep = 2.5em, row sep = 2.5em]
Q \arrow[twoheadrightarrow]{r}{e_{n}} \arrow[swap]{d}{e_{\ell}\circ i} & Q' \arrow[rightarrowtail]{d}{f\circ j_{n}} \arrow[dashed, rightarrowtail]{dl}[description]{d} \\
R' \arrow[rightarrowtail]{r}{f\circ j_{\ell}} & Y
\end{tikzcd}\]
Suppose for a moment that $j_{\ell}\circ d = j_{n}$. Then we have
\[
\phi_{n} = j_{n} \circ e_{n} = j_{\ell}\circ d \circ e_{n} = j_{\ell}\circ e_{\ell} \circ i = \phi_{\ell} \circ i,
\]
showing that the cocone is compatible. To verify the identity $j_{\ell}\circ d = j_{n}$, note that $\ell\circ i = n$ implies $[n]\leq [\ell]$ in $\Path Z$ and so, because $w$ is monotone,
\[
[j_{n}] = w[n] \leq w[\ell] = [j_{\ell}].
\]
This implies that $j_{\ell}\circ d' = j_{n}$ for some embedding $d'$, and thus
\[
f\circ j_{\ell} \circ d' = f \circ j_{n} = f\circ j_{\ell} \circ d.
\]
Since $f\circ j_{\ell}$ is an embedding (hence, a monomorphism), we get $d'=d$. It follows that $j_{\ell}\circ d = j_{n}$.

Write $\o{w}\colon Z \to X$ for the unique mediating morphism induced by the compatible cocone above. We claim that $\Path \o{w} = w$ and $f\circ \o{w} = g$. For the latter identity, observe that, for all path embeddings $n\colon Q\emb Z$,
\[
f\circ \o{w} \circ n = f\circ \phi_{n} = f\circ j_{n} \circ e_{n} = g\circ n
\]
and therefore $f\circ \o{w} = g$ because $Z$ is path-generated. To see that $\Path \o{w} = w$, note that for any $[n]\in \Path Z$
\[
\Path \o{w}([n]) = \exists_{\phi_{n}} \id_{Q} = [j_{n}] = w([n])
\]
where, as is common in the arboreal category literature, \(\exists_{\phi_{n}} \id_{Q}\) denotes the embedding part of the $(\Q,\M)$ factorisation of \(\phi_n\circ \id_Q\).

Finally, we show that $\o{w}$ is unique with these properties. Suppose ${\w{w}\colon Z\to X}$ is such that $\Path\w{w} = w$ and $f\circ \w{w} = g$. Since $Z$ is path-generated, to conclude that $\o{w} = \w{w}$ it suffices to show that $\o{w} \circ n = \w{w}\circ n$ for all path embeddings $n\colon Q\emb Z$. The identity
\[
\Path\w{w}([n]) = w([n]) = [j_{n}]
\]
entails the existence of a quotient $\epsilon_{n}\colon Q\epi Q'$ such that $\w{w}\circ n = j_{n}\circ \epsilon_{n}$. If we can prove that $\epsilon_{n} = e_{n}$ then it will follow that
\[
\o{w} \circ n = \phi_{n} = j_{n} \circ e_{n} = j_{n} \circ \epsilon_{n} = \w{w}\circ n,
\]
as desired. Now, observe that
\[
f\circ j_{n}\circ \epsilon_{n} = f\circ \w{w}\circ n = g\circ n = f\circ \o{w}\circ n = f\circ j_{n}\circ e_{n}.
\]
The composite $f\circ j_{n}$ is an embedding because $f$ is a pathwise embedding, and therefore we conclude that $\epsilon_{n} = e_{n}$.
\end{prf}

To establish the converse of Proposition~\ref{p:pw-implies-Cartesian}, namely that every Cartesian morphism is a pathwise embedding, we start by looking at a special case:
\begin{lemma}\label{l:cartesian-path-emb}
Let $h\colon P\to X$ be a morphism in $\A$ with $P$ a path. If $h$ is Cartesian, then it is an embedding.
\end{lemma}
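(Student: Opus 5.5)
Factor $h$ as $P\xepi{e} Q\xemb{m} X$; $Q$ is a path by \cite[Lemma~3.5]{AR2022}. The goal is to show that $e$ is an isomorphism. Since $e$ is already a quotient, and quotients that are also embeddings are isomorphisms, it suffices to show that $e$ is an embedding, or that it is a split mono. The natural tool is the Cartesian property of $h$, applied with $g\coloneqq m\colon Q\to X$ and a suitable $w\colon \Path Q\to \Path P$.

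First we compute $\Path h$ and $\Path m$. Since $e$ is a quotient, Lemma~\ref{l:arboreal-consequences}\ref{i:height-quotient} gives $\htf(P)=\htf(Q)$. By Lemma~\ref{l:arboreal-consequences}\ref{quot-to-surj}, $\Path e$ is surjective. It is also a forest morphism between finite chains of the same length, hence an isomorphism of chains $\Path P\cong\Path Q$. Moreover $\Path m$ is the inclusion of $\Path Q\cong\down[m]$ into $\Path X$, because $m$ is an embedding. We therefore set $w\coloneqq (\Path e)^{-1}\colon \Path Q\to\Path P$. This is a forest morphism, and
\[
\Path h\circ w=\Path m\circ\Path e\circ(\Path e)^{-1}=\Path m
\]
by functoriality (Theorem~\ref{thm:path-cat-functor-into-trees}).

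Cartesianness of $h$ now yields a unique $s\colon Q\to P$ with $h\circ s=m$ and $\Path s=w$. From $m\circ e\circ s=h\circ s=m$ and the fact that $m$ is mono, we get $e\circ s=\id_Q$. Hence $e$ is a split epi and $s$ is a split mono. It remains to show that $s\circ e=\id_P$. For this, use the uniqueness clause of the Cartesian property with $g\coloneqq h$ and $w'\coloneqq\id_{\Path P}$. Both $\id_P$ and $s\circ e$ satisfy $h\circ(-)=h$, since $h\circ s\circ e=m\circ e=h$. They also satisfy $\Path(-)=\id$, since $\Path(s\circ e)=w\circ\Path e=\id$. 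Uniqueness gives $s\circ e=\id_P$, so $e$ is an isomorphism and $h=m\circ e$ is an embedding.

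The main obstacle is the first step: verifying that $\Path e$ is an isomorphism of trees, so that $w$ exists as a morphism in $\T$. This requires that forest morphisms preserve height and that $\Path e$ is surjective, i.e.\ Lemma~\ref{l:arboreal-consequences}\ref{quot-to-surj}. Since a surjective height-preserving map between finite chains is bijective, this step goes through. The remaining steps are formal consequences of the Cartesian property, including its uniqueness part.
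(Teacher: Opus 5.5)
Your proof is correct and follows essentially the same route as the paper: factor $h=m\circ e$, lift $w=(\Path e)^{-1}$ via the Cartesian property to obtain $s\colon Q\to P$, and then apply the uniqueness clause with $\id_{\Path P}$ to get $s\circ e=\id_P$. The only difference is that you also derive $e\circ s=\id_Q$ from $m$ being monic, which shows directly that $e$ is an isomorphism, whereas the paper argues that $e$ is a section and hence an embedding; your appeal to Lemma~\ref{l:arboreal-consequences}\ref{i:height-quotient} is harmless but unnecessary, since forest morphisms already preserve height.
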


\begin{prf}
Consider the $(\Q,\M)$ factorisation of $h$:
\[\begin{tikzcd}
P \arrow[twoheadrightarrow]{r}{e} & Q \arrow[rightarrowtail]{r}{m} & X.
\end{tikzcd}\]
The forest morphism $\Path{e}$ is a bijection (because it is a surjective forest morphism between chains), hence an isomorphism in $\T$; we shall write $w$ for its inverse.
Since $h$ is Cartesian, there is a unique arrow $\o{w}$ in $\A$ such that $\Path \o{w} = w$ and the leftmost diagram below commutes.
\begin{center}
\begin{tikzcd}[column sep = 4em]
Q \arrow[bend left=30, rightarrowtail]{drr}[description]{m} \arrow[dashed]{dr}[description]{\o{w}} & & \\
 & P \arrow{r}[description]{h} & X
\end{tikzcd}
\ \ \ 
\begin{tikzcd}[column sep = 3em]
\Path Q \arrow[bend left=30]{drr}[description]{\Path m} \arrow{dr}[description]{w}& & \\
 & \Path P \arrow{r}[description]{\Path h} & \Path X
\end{tikzcd}
\end{center}

To conclude that $e$ is an isomorphism (with inverse $\o{w}$), and so $h$ is an embedding, we use again the fact that $h$ is Cartesian. Let $\w{w}\colon P\to P$ be the unique arrow in $\A$ such that $\Path\w{w}$ is the identity of $\Path P$ and $h\circ \w{w} =h$.
\begin{center}
\begin{tikzcd}[column sep = 4em]
P \arrow[bend left=30]{drr}[description]{h} \arrow[dashed]{dr}[description]{\w{w}} & & \\
 & P \arrow{r}[description]{h} & X
\end{tikzcd}
\ \ \ 
\begin{tikzcd}[column sep = 3em]
\Path P \arrow[bend left=30]{drr}[description]{\Path h} \arrow{dr}[description]{\id}& & \\
 & \Path P \arrow{r}[description]{\Path h} & \Path X
\end{tikzcd}
\end{center}
The identity of $P$ clearly satisfies these conditions, but so does $\o{w}\circ e$. Just observe that
\[
\Path (\o{w}\circ e) = w\circ \Path e = \id_{\Path P}
\]
and
\[
h\circ \o{w}\circ e = m\circ e = h.
\]
Hence $\o{w}\circ e = \id_{P}$. It follows that $e$ is an embedding (in fact, a section) and therefore an isomorphism.
\end{prf}

Combining the previous observations, we obtain a proof of Theorem~\ref{t:Cartesian-iff-pw}.

\begin{prf}[of Theorem~\ref{t:Cartesian-iff-pw}]
One direction is the content of Proposition~\ref{p:pw-implies-Cartesian}. For the other direction, suppose $f\colon X\to Y$ is Cartesian and let $m\colon P\emb X$ be a path embedding. Then~$m$ is Cartesian by Proposition~\ref{p:pw-implies-Cartesian}, and since Cartesian morphisms are closed under composition, $f\circ m$ is also Cartesian. It follows from Lemma~\ref{l:cartesian-path-emb} that $f\circ m$ is a path embedding, and so $f$ is a pathwise embedding. 
\end{prf}

%
We are now ready to state the main result of this section. 

\begin{theorem}
    \label{t:P-fibration}
    The path functor $\Path \colon \A\to\T$ is a Street fibration.
\end{theorem}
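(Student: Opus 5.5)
By Theorem~\ref{t:Cartesian-iff-pw}, it suffices to show the following: for every forest morphism $h\colon T\to \Path Y$ from a tree $T$, there is a pathwise embedding $f\colon X\to Y$ and an isomorphism $\iota\colon \Path X\cong T$ with $\Path f = h\circ\iota$. We will build $X$ as the colimit of a tree-diagram of paths indexed by $T$. The key tool is tree-connectedness, which will let us compute $\Path X$.

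\emph{Construction.} For each $t\in T$, choose a representative path embedding $m_t\colon P_t\emb Y$ of $h(t)$. For $s\leq t$ in $T$ we have $h(s)\leq h(t)$, so there is a unique embedding $P_s\emb P_t$ compatible with the $m$'s. Uniqueness holds because $m_t$ is mono, and by Lemma~\ref{l:path-cat-suprema}\ref{at-most-one-emb} such embeddings are essentially unique anyway. This gives a tree-diagram of paths $D\colon T\to\A$. The initial object exists by the axioms, and it handles the case $T=\emptyset$. The arboreal axioms guarantee that $X\coloneqq\colim D$ exists. The $m_t$ form a compatible cocone, so they induce $f\colon X\to Y$ with $f\circ i_t = m_t$, where $i_t\colon D(t)\to X$ is the colimit arrow.

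\emph{Computing $\Path X$.} Define $\iota^{-1}\colon T\to\Path X$ by $t\mapsto [\exists_{i_t}\id]$, the embedding part of $i_t$. In fact $i_t$ is itself an embedding, since $f\circ i_t=m_t$ is an embedding. Monotonicity is clear. For surjectivity, take a path embedding $k\colon Q\emb X$. By tree-connectedness there is a least $t$ such that $k=i_t\circ p$ for some $p\colon Q\to D(t)$, which is an embedding. If $p$ were not an isomorphism, i.e.\ $\htf(Q)<\htf(D(t))$, then $Q$ would factor through $D(s)$ for the predecessor $s$ of $t$. This works because $h$ is a forest morphism, so heights match and the predecessor of $t$ is sent to the predecessor subpath. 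That contradicts minimality, so $[k]=[i_t]$. For injectivity, suppose $[i_s]=[i_t]$. Apply $\Path f$ to get $h(s)=h(t)$. Tree-connectedness applied to $i_t$ yields a least index; both $s$ and $t$ must lie above it, and height considerations force $s=t$. Order reflection is handled similarly: $[i_s]\leq[i_t]$ gives a factorisation of $i_s$ through $i_t$, and the least-index argument places $s\leq t$. Hence $\iota$ is an isomorphism of trees, and it is a forest morphism because heights agree, $\htf(P_t)=\htf(h(t))=\htf(t)$. By construction $\Path f\circ\iota^{-1}=h$.

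\emph{Pathwise embedding.} Every path embedding into $X$ is equivalent to some $i_t$, as just shown, and $f\circ i_t=m_t$ is an embedding. Hence $f$ is a pathwise embedding, and it is Cartesian by Proposition~\ref{p:pw-implies-Cartesian}.

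\emph{Main obstacle.} The delicate step is the computation of $\Path X$, specifically injectivity and order reflection. There we must extract from tree-connectedness, namely the existence of a least index $t$, that distinct nodes of $T$ give distinct subobjects of $X$. The plan is to use the least index together with height preservation: path embeddings of equal height are isomorphisms by Lemma~\ref{l:arboreal-consequences}\ref{i:height-quotient}. We also use that $h$ is a forest morphism, so that a path factoring through $D(t)$ and $D(s)$ must factor through $D(s\wedge t)$, whose height is too small unless $s=t$.
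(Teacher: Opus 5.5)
Your proposal is correct and follows the paper's proof essentially step for step: the same tree-diagram of chosen representatives, the same colimit $X$ and mediating $f$, and the same least-index-plus-height argument for order reflection. The only variation is that you prove surjectivity of $T\to\Path X$ directly, by pushing a non-minimal factorisation down to the predecessor of the least index, whereas the paper uses that the image of $\rho$ is downward closed; your side remark about $T=\emptyset$ is unnecessary, since trees in $\T$ are non-empty, and in any case $\zero$ would not handle that case because $\Path\zero$ is a singleton.
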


\begin{prf}
Consider a forest morphism $h\colon T\to \Path{Y}$ with $T$ a tree. In view of Theorem~\ref{t:Cartesian-iff-pw}, we must exhibit a pathwise embedding $f\colon X\to Y$ in $\A$ and an isomorphism $\iota\colon \Path X\cong T$ in $\T$ such that $\Path f = h\circ \iota$. We shall define $X$ as the colimit of a tree-diagram of paths indexed by $T$, as follows.

For each $i\in T$, denote by $P_{i}$ the domain of any path embedding $n_{i}$ into $Y$ such that $h(i) = [n_{i}]$; note that any two elements of $[n_{i}]$ have isomorphic domains, so $P_{i}$ is unique up to isomorphism. If $i,j\in T$ satisfy $i\leq j$, then $h(i)\leq h(j)$ and so there exists a unique embedding $e\colon P_{i}\emb P_{j}$ such that $n_{i} = n_{j} \circ e$. This way, the assignment $i\mapsto P_{i}$ yields a tree-diagram of paths $F\colon T\to \A$. Let us set $X\coloneqq \colim{F}$, and denote by $m_{i}\colon P_{i}\to X$ the colimit arrows. 

There is an obvious compatible cocone with vertex $Y$ over the diagram $F$, whose component at $i$ is $n_{i}$. Hence, there exists a unique mediating morphism $f\colon X\to Y$ making the following diagram commute.
\[\begin{tikzcd}[row sep =2.4em]
 & Y & \\
 & X \arrow{u}[description]{f} & \\
 P_{i} \arrow[rightarrowtail]{rr} \arrow[rightarrowtail, bend left=30]{uur}[description]{n_{i}} \arrow{ur}[description]{m_{i}} & & P_{j} \arrow[rightarrowtail, bend right=30]{uul}[description]{n_{j}} \arrow{ul}[description]{m_{j}}
\end{tikzcd}\]
Note in particular that, because each $n_{i}$ is an embedding, so is each $m_{i}$. To see that~$f$ is a pathwise embedding, consider any path embedding $m\colon P\emb X$. Since $P$ is tree-connected, there exists a (least) $k\in T$ such that $m$ factors through~$m_{k}$, say $m = m_{k}\circ \lambda$ with $\lambda\colon P\emb P_{k}$. It follows that
\[
f\circ m = f\circ m_{k}\circ \lambda = n_{k}\circ \lambda,
\]
which is an embedding, and so $f$ is a pathwise embedding.

Write $\rho\colon T \to \Path{X}$ for the map sending $i$ to (the equivalence class of) $m_{i}$, and note that for all $i\in T$ we have $(\Path{f}\circ \rho)(i) = \Path{f}([m_{i}]) = [f\circ m_{i}] = [n_{i}]= h(i)$, that is
\begin{equation}\label{eq:rho-factors}
\Path{f}\circ \rho = h.
\end{equation}
To conclude, it suffices to show that~$\rho$ is an isomorphism of forests, for then its inverse $\iota\coloneqq\rho^{-1}$ satisfies the required property. Observe that eq.~\eqref{eq:rho-factors} implies that~$\rho$ preserves the height of elements. But $\rho$ is also monotone (by the same argument showing functoriality of $F$), therefore it is a forest morphism. To show that $\rho$ is injective, suppose that $\rho(i)\leq \rho(j)$ and let $k\in T$ be the least element such that $m_{i}$ factors through the colimit arrow $m_{k}$; in particular, $k\leq i,j$. 
Recalling from Section~\ref{s:coalg-concretely} that for an element $x$ of a forest, its height is denoted by~\(\htf(x)\),
\[
\htf([m_{k}]) = \htf(\rho(k)) = \htf(k)\leq \htf(i) = \htf(\rho(i)) = \htf([m_{i}])
\]
because $k\leq i$. It follows that $\htf(k) = \htf(i)$, and so $i=k$. Thus, $i\leq j$.

Finally, since the image of $\rho$ is downwards closed, to prove that $\rho$ is surjective it suffices to show that each element of $\Path{X}$ is below one of the form $\rho(k)$, for some $k\in T$.
But this follows by reasoning as above, using the fact that paths are tree-connected.
\end{prf}

One might ask whether Theorem~\ref{t:P-fibration} could be strengthened to assert that \(\Path\) is a topological functor in the sense of \cite[Chapter~21]{adamek2004abstract}. The following example shows that this is not possible.

\begin{example}
    \label{e:not-topological}
    Let \(\A \coloneqq \EM(\Pk)\) for \(k\geq 2\).
    Recall from Section~\ref{s:coalg-concretely} that objects in this category can be described as tuples \((A,\leq, p)\) where \((A,\leq)\) is a forest-ordered structure and \(p\colon A \to \{1,2\}\) is a pebbling function. Consider \(A_1, A_2\) discrete structures on \(\{\star\}\) equipped with \(p\colon A_i \to \{1,2\}\) such that \(p(\star) = i\).
    Then, both \(\Path A_1\) and \(\Path A_2\) are isomorphic to the two-element chain \(\mathbf 2 = (\bot < \top)\), whose bottom element corresponds to the empty subpath of \(A_1\) and \(A_2\), respectively.

    However, the span of forest morphisms \((f_i\colon \mathbf 2 \to \Path A_i)_{i=1,2}\) in \(\T\) cannot be lifted to a span \((g_i\colon X \to A_i)_{i=1,2}\) in \(\A\) such that \(\Path(g_i) = f_i\), since \(X\) would have to be a structure \((\{\star\}, \leq, p)\) such that \(p\colon \{\star\} \to \{1,2\}\) maps \(\star\) to both $1$ and $2$ at the same time.
    In particular, there is no initial lift of the span consisting of the \(f_i\)'s, and so \(\Path\colon \EM(\Pk)\to \T\) is not topological.
\end{example}

It remains an open problem to identify which Street fibrations over the category $\T$ of trees arise from arboreal categories.
Finally, we present an example of a category that is arboreal in the old sense whose path functor is not a fibration.

\begin{example}
Let \(\sg\) be a non-empty relational signature consisting only of unary relations, which we regard as colours. Let \(\A\) be the category \(\RS\), equipped with the factorisation system (regular epi, mono), rather than the factorisation system (epi, regular mono) considered earlier. Then $\A$ is arboreal in the old sense and the path functor \({\Path\colon \A \to \T}\) is not a fibration.

For the sake of this example, assume $\sg$ contains two colours, \emph{blue} and \emph{red}.
The non-empty paths in $\A$ are the one-element structures with at most one colour.
Hence, up to isomorphism, there is one path of height 0 (the empty structure), one path of height 1 (the one-element structure with no colours), and two paths \(P_{blue}\) and \(P_{red}\) of height 2, one for each colour. There are no paths of height 3 or more.
The category \(\A\) is arboreal in the old sense, but not in the new sense, since the tree-diagram of paths \(P_{blue} \leftarrow \{\star\} \to P_{blue}\) has colimit \(P_{blue}\), but the identity \(P_{blue} \to P_{blue}\) does not factor through a minimal element.

Furthermore, the path functor \(\Path\colon \A \to \T\) is not a fibration because the trees in the image of \(\Path\) are coproducts of trees of the shapes below.

\begin{center}
    \begin{tikzpicture}[
            level distance=5mm,
            every node/.style={fill=black,circle,inner sep=1.5pt},
            level 1/.style={sibling distance=8mm}]
        \begin{scope}
        \node {} [grow'=up]
            child {[fill] circle (2pt)}
            child {[fill] circle (2pt)};
        \end{scope}

        \begin{scope}[xshift=-6em]
        \node {} [grow'=up]
            child {[fill] circle (2pt)};
        \end{scope}

        \begin{scope}[xshift=-12em]
        \node {} [grow'=up];
        \end{scope}
    \end{tikzpicture}
\end{center}
If \(T\) is the tree consisting of a root with three children, there is no \(Y\) such that \(T \cong \Path(Y)\). Thus, no morphism of the form \(T \to \Path X\) admits a Cartesian lift.
\end{example}

\subsubsection{Future work.} We mention some current lines of research that contribute to the structure theory of arboreal categories, in the newly defined sense.
\begin{itemize}
\item The same authors are working towards a representation theorem for \emph{concrete arboreal categories},\footnote{An arboreal category is \emph{concrete (over trees)} if the path functor is faithful.} whereby every such category is equivalent to a category of ``labelled trees''.
\item The relationship between bisimilarity and behaviour equivalence in arboreal categories is being investigated in ongoing research by the first author on model-theoretic types. Objects \(X\) and \(Y\) are said to be \emph{behaviourally equivalent} if there is a cospan of open morphisms \(X \to Z \leftarrow Y\).
\item The existence of arboreal coreflections for certain categories of coalgebras for endofunctors is investigated in an ongoing work by the first author, Henning Urbat and Thorsten Wißmann.
\end{itemize}

\begin{credits}
\subsubsection{\ackname}
The first author has received funding from the EU's Horizon Europe research and innovation programme under the Marie Sk\l{}odowska-Curie grant agreement No~101111373.
The authors would like to thank Henning Urbat for encouraging them to publish this work and for asking the question whether \(\Path\) is a topological functor, answered in Example~\ref{e:not-topological}.
\end{credits}

\bibliographystyle{splncs04}

\end{document}